\documentclass[a4paper,USenglish,cleveref,autoref]{lipics-v2021}

\hideLIPIcs  

\graphicspath{{figures/}} 

\ifpdf
\hypersetup{
	pdftitle={Polyline Simplification under the Local Fr\'echet Distance has Almost-Quadratic Runtime in 2D},
	pdfauthor={S.\ Storandt and J.\ Zink}
}
\fi

\usepackage[disable]{todonotes}

\usepackage{amssymb}
\usepackage{thmtools}
\usepackage{color,colortbl}
\definecolor{lightgray}{gray}{0.85}
\usepackage{subcaption}
\usepackage{graphicx}
\usepackage{xspace}

\newcommand{\appmark}{$\star$}

\newcommand{\ohhat}{\ensuremath{\hat{\mathcal{O}}}}
\newcommand{\oh}{\ensuremath{\mathcal{O}}}

\newcommand{\hau}{Hausdorff\xspace}
\newcommand{\fre}{Fr\'echet\xspace}
\newcommand{\lone}{L$_1$\xspace}
\newcommand{\ltwo}{L$_2$\xspace}
\newcommand{\linf}{L$_{\infty}$\xspace}
\newcommand{\lp}{L$_p$\xspace}

\usepackage{mathtools}


\bibliographystyle{plainurl}

\title{Polyline Simplification under the Local \fre Distance Has Almost-Quadratic Runtime in 2D}

\titlerunning{Polyline Simplification under the Local \fre Distance} 

\author{Sabine Storandt}{Universit\"at Konstanz, Konstanz, Germany}{sabine.storandt@uni-konstanz.de}{}{}
\author{Johannes Zink}{Universit\"at W\"urzburg, W\"urzburg, Germany}{zink@informatik.uni-wuerzburg.de}{https://orcid.org/0000-0002-7398-718X}{}

\authorrunning{S.\ Storandt and J.\ Zink} 

\Copyright{Sabine Storandt and Johannes Zink} 

\ccsdesc[500]{Theory of computation~Computational geometry} 

\keywords{Polyline simplification, Local \fre distance, Exact algorithm, $p$-norm} 

\category{} 

\relatedversion{} 




\nolinenumbers 


\begin{document}
	
\maketitle

\begin{abstract}
Given a polyline on $n$ vertices, the polyline simplification problem
asks for a minimum-size subsequence
of these vertices defining a new polyline whose distance to the original polyline is at most a given threshold under some distance measure,
usually the local \hau or the local \fre distance.
Here, \emph{local} means that, for each line segment of the simplified polyline,
only the distance to the corresponding sub-curve in the original polyline is measured.

This minimization problem is polynomial-time solvable
(for the standard distance measures)
as first shown by Imai and Iri for the local \hau distance
by a cubic-time algorithm.
Later, Chan and Chin improved this running time to $\oh(n^2)$.
For the local \fre distance, the situation has been more intricate
for the last decades.
Doubtlessly, cubic running time is possible
as shown by Godau adjusting the Imai--Iri algorithm to the \fre metric.
This running time has been cited as state-of-the-art many times
for the last 30 years also in influential articles.
Very recently Buchin et al.\ [ESA '22] showed how to improve this running time
to $\oh(n^{5/2+ \varepsilon})$ for any $\varepsilon > 0$ as an application
of a sophisticated data structure to store polylines
for \fre distance queries.

But there are actually older techniques and concepts in literature capable of
solving this problem in almost-quadratic time.
Namely, Melkman and O'Rourke [Computational Morphology '88]
introduced a geometric data structure to solve
polyline simplification under the local \hau distance
in $\oh(n^2 \log n)$ time,
and Guibas, Hershberger, Mitchell, and Snoeyink
[Int.\ J.\ Comput.\ Geom.\ Appl.~'93]
considered polyline simplification under the \fre distance as \emph{ordered stabbing}
and provided an algorithm with a running time of $\oh(n^2 \log^2 n)$,
but they did not restrict the simplified polyline to use only vertices
of the original polyline.

We show that their techniques can be adjusted to solve polyline 
simplification under the local \fre distance in $\oh(n^2 \log n)$ time.
This algorithm may serve as a more efficient subroutine
for multiple other algorithms.
We provide a simple algorithm description as well as
rigorous proofs to substantiate this theorem.
We also investigate the geometric data structure introduced by
Melkman and O'Rourke, which we refer to as \emph{wavefront},
in more detail and feature some interesting properties.
As a result, we can prove that under the \lone and the \linf norm, the algorithm
can be significantly simplified and then only requires a running time of $\oh(n^2)$.
We also define a natural class of polylines where our algorithm
always achieves this running time also in the Euclidean norm \ltwo.
\end{abstract}

\section{Introduction}
Polyline simplification has a long history in computational geometry,
where it has also been known as \emph{polygonal approximation},
\emph{line generalization}, or \emph{$\varepsilon$-simplification}.
It owes its relevance~-- also beyond computational geometry~-- to a large variety of applications,
such as processing of vector graphics~\cite{Kreveld2020,wu2003non},
robotics~\cite{Nguyen2007,Duda1973},
trajectory clustering~\cite{brankovic2020k},
shape analysis~\cite{min2019sgm},
data compression~\cite{meratnia2004spatiotemporal},
curve fitting~\cite{Ramer1972},
and map visualization~\cite{ahmed2015frechet,Ahmed2012,Gruppi2015,Isenberg2013,Visvalingam1990}.
The task of polyline simplification 
is to replace a given polyline on $n$ vertices with a
minimum-size subsequence of its vertices while
ensuring that the input and the output polyline are  sufficiently similar.
The similarity is governed by a given distance threshold $\delta$.
Line segments between vertices in the output polyline are called \emph{shortcuts}.
To determine the similarity of the input and output polyline,
the \hau and the \fre distance are the most commonly used measures.
Both can be applied either globally or locally.
In the global version, the distance between the entire input and output
polyline is measured and must not exceed~$\delta$.
In the local version, the distance between each shortcut and the part of
the input polyline it bridges must not exceed $\delta$. 

For many applications, local similarity is more sensible and intuitive.
Simplifications with global similarity have only been studied recently.
For the global (undirected) \hau distance, computing a simplified polyline with the
smallest number of shortcuts yields an NP-hard problem \cite{Kreveld2020}.
For the global \fre distance, an $\oh(n^3)$ time algorithm was designed
by Bringmann and Chaudhury~\cite{bringmann2019polyline}.
For the more extensively studied and more classical problem
of simplification with the local \hau distance,
the Imai--Iri algorithm~\cite{Imai1988}, published in 1988, guarantees
a running time of $\oh(n^3)$ by reducing the simplification
problem to a graph problem.
Essentially, the algorithm constructs a graph where the polyline vertices are the nodes
and where there is an edge between a pair of vertices if they can be connected with
a shortcut respecting the $\delta$ distance bound.
For the local \hau distance, Melkman and O'Rourke~\cite{melkman1988polygonal} showed
already in 1988 that the Imai--Iri algorithm can be improved to run in
$\oh(n^2 \log n)$ time by making the graph construction phase more efficient.
In 1996, Chan and Chin~\cite{Chan1996} further reduced the running time to $\oh(n^2)$.

For the local \fre distance, though, the cubic running time of the Imai--Iri algorithm, which was shown by Godau~\cite{Godau91} in 1991,
was a longstanding bound and used as a building block or reference also in recent publications~\cite{bringmann2019polyline, Kreveld2020}.
Agarwal, Har-Peled, Mustafa and Wang~\cite{Agarwal2005} explicitly posed the problem of whether there exists a subcubic algorithm for polyline simplification under the local \fre distance as an open question in 2005.
The question was answered positively very recently
by Buchin, van der Hoog, Ophelders, Schlipf, Silveira, and Staals~\cite{buchin2022efficient}.
They describe a data structure that outputs the \fre distance between any line segment and any subpolyline of a preprocessed input polyline in $\oh(\sqrt n \log^2 n)$ time.
Using this data structure to check whether there is a valid shortcut for every vertex pair, a polyline simplified optimally can be computed in $\oh(n^{5/2+\varepsilon})$ time (and space) for any $\varepsilon > 0$.
We remark that this data structure is quite sophisticated and actually more powerful than required for polyline simplification.
To check whether there is a valid shortcut between a vertex pair with respect to the \fre distance, it suffices to be able to decide whether the distance of the shortcut to its subpolyline is at most $\delta$.
However, the data structure always returns the exact distance value and it can accomplish this for arbitrary line segments and not only potential shortcuts.

\subsection{Related Work}
The most practically relevant setting for polyline simplification is to consider two-dimensional input curves  in the Euclidean plane (i.e., under the \ltwo norm).
However, the problem was also studied in higher dimensions $d > 2$ and under different norms.

The $\oh(n^3)$ time algorithm by Imai, Iri, and Godau~\cite{Imai1988, Godau91}
for polyline simplification under the local \hau and \fre distance
as well as the $\oh(n^3)$ time algorithm by Bringmann and Chaudhury~\cite{bringmann2019polyline}
for the global \fre distance can be generalized to work in
$\mathbb{R}^{d \geq 2}$  with the running time only increasing
by a polynomial factor in $d$.
For the local \hau distance, Chan and Chin \cite{Chan1996} showed
that the Imai--Iri algorithm can be improved to run in $\oh(n^2)$ time
for \lone, \ltwo and \linf (the concept can also be applied to any
L$_{p \in (1,\infty)}$ up to possible numerical issues that are further
discussed in \cref{sec:allp}).
Furthermore Barequet et al.~\cite{barequet2002efficiently} proposed
an $\oh(n^2 \log n)$ time algorithm for the local \hau distance
under the \ltwo norm 
in $\mathbb{R}^3$, as well as an $\oh(d2^dn^2)$ algorithm for \lone and an $\oh(d^2n^2)$ algorithm for \linf.

\begin{table}
	\centering
	\begin{tabular}{l|l|l|l|l}
		& Conditional  &  & \multicolumn{2}{c}{\fre}\\
		& Lower Bound&Local \hau & Local & Global \\	
		\hline
		\rowcolor{lightgray}
		\lone		& $\ohhat(n^{3-\varepsilon})$~\cite{bringmann2019polyline}	& $\ohhat(2^{d}n^2)$ ~\cite{barequet2002efficiently}  & \textcolor{blue}{$\oh(n^2)~d=2$ [Thm.~\ref{clm:lonelinf-n2}]}	 & $\ohhat(n^3)$~\cite{bringmann2019polyline}\\
		\rowcolor{lightgray}
		& & $\ohhat(n^3)$~\cite{Imai1988}& $ \ohhat(n^3)$~\cite{Godau91}	 &\\ 
		L$_{\stackrel{p \in (1,\infty)}{_{p\neq 2}}}$		& $\ohhat(n^{3-\varepsilon})$~\cite{bringmann2019polyline}	&$\oh(n^2 )~d=2$~\cite{Chan1996}  & \textcolor{blue}{$\oh(n^2 \log n)~d=2$ [Cor.~\ref{clm:lp-n2logn}]}	 & $\ohhat(n^3)$~\cite{bringmann2019polyline}\\
		& & $\ohhat(n^3)$~\cite{Imai1988}& $ \ohhat(n^3)$~\cite{Godau91}	 &\\\rowcolor{lightgray}
		\ltwo		& $\ohhat(n^{2-\varepsilon})$~\cite{buchin2016fine}	&$\oh(n^2)~d=2$~\cite{Chan1996}  &	\textcolor{blue}{$\oh(n^2 \log n)~d=2$ [Thm.~\ref{clm:ltwo-n2logn}]}	& $\ohhat(n^3)$~\cite{bringmann2019polyline}\\
		\rowcolor{lightgray}&	&$\oh(n^2 \log n)$~$d=3$~\cite{barequet2002efficiently}	& $\oh(n^{5/2+\varepsilon})$~$d=2$ \cite{buchin2022efficient} &	\\
		\rowcolor{lightgray}&	& $\ohhat(n^3)$~\cite{Imai1988} &  $\ohhat(n^3)$~\cite{Godau91} &	\\
		\linf	&$\ohhat(n^{2-\varepsilon})$~\cite{buchin2016fine}& $\ohhat(n^2)$~\cite{barequet2002efficiently}	&\textcolor{blue}{$\oh(n^2)~d=2$ [Thm.~\ref{clm:lonelinf-n2}]} 	& $\ohhat(n^3)$~\cite{bringmann2019polyline}\\
		& &  & $ \ohhat(n^3)$~\cite{Godau91}	 &\\
	\end{tabular}
	\caption{Conditional lower bounds (presented as running times that are excluded)
		and algorithmic upper bounds for polyline simplification in $\mathbb{R}^d$
		under different similarity measures and \lp metrics.
		Here, $n$ is the number of vertices, $d$ the number of dimensions
		and $\varepsilon$ is any constant $> 0$.
		The $\ohhat$-notation hides
		polynomial factors in $d$. Blue entries mark the results
		presented in this manuscript.}\label{tab:overview}
	\vspace{-3ex}
\end{table}

Bringmann and Chaudhury~\cite{bringmann2019polyline} have also proven a conditional 
lower bound for simplification in $\mathbb{R}^d$ under the local \hau distance
as well as under the local and global \fre distance.
More precisely, for \lp with $p \in [1,\infty), p\neq 2$, algorithms with
a running time subcubic in $n$ and polynomial in $d$ were ruled out
(unless the $\forall\forall\exists$-OV hypothesis fails).
For large dimensions $d$, the algorithmic upper bounds of $\oh(n^3 \cdot \mathrm{poly}(d))$
for polyline simplification discussed above are hence tight.
However, the lower bound  still allows the existence of simplification algorithms
with a running time in $\oh(n^k \cdot \exp(d))$ with $k<3$.
Hence, for small values of $d$ (which are of high practical relevance),
faster algorithms are possible, as  evidenced by the $\oh(d2^dn^2)$ time algorithm
for the local \hau distance under the \lone norm \cite{barequet2002efficiently}.
For \ltwo and \linf, the best currently known conditional lower bound for
the three similarity measures, local \hau distance, local \fre distance, and
global \fre distance, was proven by Buchin et al.~\cite{buchin2016fine}.
It rules out algorithms with a subquadratic running time in $n$ and polynomial
running time in $d$ (unless SETH fails).
Here again, better running times for  simplification problems in a low-dimensional
space with $d \in o(\log n)$ are still possible. 
\cref{tab:overview} provides an overview of known lower and
upper bounds for optimal polyline simplification.

As the cubic running time  of the Imai--Iri algorithm
and the quadratic running time of the Chan--Chin algorithm
may be prohibitive for processing long polylines even for $d=2$,
heuristics and approximation algorithms have been investigated.
The Douglas--Peucker algorithm \cite{Douglas1973},
one of the most simple and widely used heuristics, computes a simplified
polyline under the local \hau distance in $\oh(n \log n)$ time~\cite{Hershberger1992}
and under the local \fre distance in $\oh(n^2)$ time~\cite{Kreveld2020}~--
but without any guarantee regarding the solution size.
Agarwal et al.~\cite{Agarwal2005} presented an approximation algorithm with a running time of
$\oh(n \log n)$ that works for any \lp norm and
generalizes to $\mathbb{R}^d$.
It computes a simplification under the local \fre distance for $\delta$,
where the simplification size does not exceed
the optimal simplification size for $\delta/2$.
There are other heuristics neither based on the \hau nor the \fre distance
like the algorithm by Visvalingam and Whyatt~\cite{Visvalingam1993},
which measures the importance of a vertex by the triangular area it adds. 

There are also variants of the \fre distance which allow for faster polyline simplification.
For example, polyline simplification under the discrete \fre distance
(where only the distance between the vertices but not the points on
the line segments in between matters) can be solved to optimality
in $\oh(n^2)$ time \cite{bereg2008simplifying}.
However, the discrete \fre distance heavily depends on the density of vertices on the polyline,
and hence for many applications the
continuous \fre distance studied in this paper constitutes a more meaningful measure.

The problem variant where the requirement is dropped that all vertices of the simplification must be vertices of the input polyline is called a \emph{weak} simplification.
Guibas et al.~\cite{Guibas1993} showed that an optimal
weak simplification under the (global) \fre distance 
can be computed in $\oh(n^2 \log^2 n)$ time.
Later Agarwal et al.~\cite{Agarwal2005} gave an $\oh(n \log n)$-time
approximation algorithm for a weak simplification violating the distance threshold~$\delta$
by a factor of at most~8; see also an overview by Van de Kerkhof et al.~\cite{van2019global}.

Regarding our techniques, we remark that the concept of a \emph{wavefront}
being comprised of (circular) arcs is well-established in computational geometry.
Mitchell, Mount, and Papadimitriou~\cite{Mitchell1987} have introduced
the \emph{continuous Dijkstra} method in 1987, where a wavefront is expanded%
\footnote{This expansion is realized step-wise by discrete events.}
along a surface of a polyhedron to allow shortest path computations.
This method was also used by Hershberger and Suri~\cite{Hershberger1999}
to compute a shortest path in the plane given a set of polygonal obstacles.
Also, many sweep-line algorithms maintain a kind of a wavefront.
For example, the famous algorithm by Fortune~\cite{Fortune1987}
for computing a Voronoi diagram maintains a wavefront made up
of hyperbolic curves.
Another approach computes a (weighted) Voronoi diagram directly by
a wavefront expanding around the input points~\cite{Held2020}.
Apart from a few similarities regarding the computation of
intersection points of (circular) arcs, line segments, etc.,
we use a different type of wavefront.
In all of the aforementioned examples, the wavefront is a \emph{kinetic} data structure
where things move or expand continuously (although it suffices to consider a few discrete events).
In contrast, the structure by Melkman and O'Rourke~\cite{melkman1988polygonal}
that we call a wavefront is more static and simple:
it is a collection of (circular) arcs that we 
update iteratively with a new unit circle.
There is no expansion of existing arcs or ``time in between''.

\subsection{Contribution}
We present an algorithm for polyline simplification under the local \fre distance
in two dimensions and for several \lp norms with a (near-)quadratic running time;
see \cref{sec:algorithm}.

Our algorithm heavily builds upon the Melkman--O'Rourke algorithm~\cite{melkman1988polygonal}.
They exploit the geometric properties of the local \hau distance using cone-shaped \emph{wedges} and a \emph{wavefront} to accelerate the shortcut graph construction.
We adapt both of these concepts to the local \fre distance.
We carefully study the properties of the resulting wavefront
and explain how to maintain and efficiently update a wavefront data structure
which at its core is a simple balanced binary search tree.
As our main result,
we prove that the asymptotic running time of the Melkman--O'Rourke algorithm does not increase with our modifications;
see \cref{sec:wavefront}.

\begin{restatable}{theorem}{ltwotheorem}
	\label{clm:ltwo-n2logn}
	A two-dimensional $n$-vertex polyline can be simplified optimally
	under the local \fre distance in the~\ltwo norm (the Euclidean norm)
	in $\oh(n^2 \log n)$ time and $\oh(n)$~space.
\end{restatable}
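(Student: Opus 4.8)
The plan is to follow the Imai--Iri paradigm and reduce optimal simplification to a shortest-path computation in an implicit shortcut graph, while replacing the expensive per-pair \fre tests by a single left-to-right sweep per start vertex that maintains the wavefront. Concretely, I define $\mathrm{opt}(j)$ as the minimum number of shortcuts needed to reach $v_j$ from $v_1$ (with $\mathrm{opt}(1)=0$) and process the start vertices $v_i$ in increasing order of $i$; since every shortcut goes forward, $\mathrm{opt}(i)$ is already final when $v_i$ becomes a start vertex. For a fixed $v_i$ I sweep $j = i+1, i+2, \ldots$ and, whenever the shortcut $\overline{v_i v_j}$ turns out to be valid, relax $\mathrm{opt}(j) \leftarrow \min\{\mathrm{opt}(j), \mathrm{opt}(i)+1\}$ and store a predecessor. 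Because I never materialise the quadratically many edges but only keep the arrays $\mathrm{opt}(\cdot)$ and the predecessor pointers together with the wavefront of the current sweep, the space stays in $\oh(n)$.

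Next I make the validity test geometric. Fixing the start $v_i$, a candidate shortcut is a ray from $v_i$ in direction $\theta$, and by the Alt--Godau free-space characterisation restricted to a single segment, $\overline{v_i v_j}$ has \fre distance at most $\delta$ to $v_i\cdots v_j$ iff there are monotone matching positions $0 = t_i \le t_{i+1}\le\cdots\le t_j = |v_j - v_i|$ along the ray with each $t_k$ inside the free interval $[\alpha_k(\theta),\beta_k(\theta)]$ in which the ray meets the disk $D(v_k,\delta)$ (the intermediate free space is convex, so consecutive feasible vertex positions can always be joined monotonically). Greedily this is captured by $T_k(\theta) = \max\{T_{k-1}(\theta), \alpha_k(\theta)\}$ together with the feasibility test $T_k(\theta)\le\beta_k(\theta)$, and the shortcut ending at $v_j$ is valid iff the direction $\theta^\ast = \arg(v_j - v_i)$ has never been pruned and $T_{j-1}(\theta^\ast)\le |v_j - v_i|$. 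The wavefront is exactly the graph of the progress function $\theta\mapsto T_k(\theta)$; because each $\alpha_k(\theta)$ is the near intersection of the ray with the circle $\partial D(v_k,\delta)$, the wavefront is a star-shaped curve around $v_i$ composed of circular arcs. Adding $v_j$ amounts to replacing the wavefront by the pointwise maximum with $v_j$'s near arc on the directions where it advances, and deleting the directions where the progress has overshot $\beta_j$.

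I then implement the wavefront as a balanced binary search tree keyed by direction, storing its arcs in angular order; inserting the new circle, splitting the boundary arcs, deleting the overwritten arcs, pruning infeasible directions, and answering a query $T_{j-1}(\theta^\ast)$ each cost $\oh(\log n)$ per affected arc. Correctness of the whole algorithm then follows from the free-space characterisation above and the monotonicity of the \fre-based dynamic program. The main obstacle, and the real content of the running-time bound, is the amortised analysis: I must show that over a whole sweep for one $v_i$ the total number of arc insertions and deletions is only $\oh(n)$, so that the sweep costs $\oh(n\log n)$ and the algorithm $\oh(n^2\log n)$ in total. The charging idea is that $T_k(\theta)$ only ever increases, so a deleted arc never reappears and deletions can be charged to insertions; the crux is a structural lemma on the wavefront (exploiting its star-shapedness, the monotone advancement, and the fact that two circles intersect in at most two points) proving that inserting one vertex creates only a constant amortised number of new arcs. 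Finally I have to attend to the geometric primitives in the Euclidean setting -- computing ray/circle and circle/circle intersections and comparing circular arcs -- and to degeneracies such as tangencies, coincident directions, and vertices lying outside all relevant disks, which do not affect the asymptotics.
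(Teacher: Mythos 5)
Your overall plan is the same as the paper's: a per-start-vertex sweep in the Imai--Iri framework, an implicit shortcut graph with a linear-space dynamic program, and a wavefront of circular arcs stored in a balanced search tree keyed by direction around $p_i$. Your correctness framework is also sound and mirrors the paper's: the ordered-stabbing characterisation (monotone matching points inside the disks $D(v_k,\delta)$, joined linearly using convexity) is exactly how the paper proves that shortcuts crossing the wavefront are valid, your progress function $\theta\mapsto T_k(\theta)$ \emph{is} the wavefront, your pruning of directions with $T_{j-1}(\theta)>\beta_j(\theta)$ is the paper's extra wedge-narrowing step, and your monotonicity of $T_k$ is the paper's lemma that the wavefront only moves away from $p_i$. (Processing start vertices in increasing rather than decreasing order is immaterial.)

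However, there is a genuine gap, and it sits precisely at the point you yourself flag as ``the crux'': you never prove that inserting one vertex creates only $\oh(1)$ (amortised) new arcs. The ingredients you cite --- star-shapedness, monotone advancement, and the fact that two circles cross in at most two points --- do not by themselves yield this. The wavefront is a concatenation of arcs belonging to up to $n$ \emph{different} unit circles, so the pairwise bound only limits crossings with each individual arc; a priori a new circle $C_j$ could weave above and below the wavefront $\Theta(n)$ times, alternating between arcs of different circles. Each such crossing fragments the wavefront, so a sweep could create $\Theta(n^2)$ arcs and your deletion-charging argument (which is fine as far as it goes: deleted arcs never reappear) would have nothing small to charge against, leaving you with no bound better than cubic overall. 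Closing this hole is the main technical content of the paper: it first shows that two bottom arcs (with respect to $p_i$) cross at most once, the second circle--circle intersection lying on the top arcs (\cref{clm:bottom-arcs-dont-intersect-twice}); then that every unit circle contributing an arc to the wavefront contains the \emph{entire} wavefront (\cref{lem:unit-circles-of-the-wave-front-contain-whole-wave-front}); and combines these, via an induction over the insertion steps, to prove that \emph{any} unit circle intersects the whole wavefront at most twice (\cref{lem:wave-front-has-at-most-two-intersection-points-with-another-circle}). Only with this worst-case constant bound do the wavefront size stay in $\oh(n)$, the case analysis for an update run in amortised $\oh(\log n)$, and the claimed $\oh(n^2\log n)$ total time follow. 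Your proposal correctly identifies where the difficulty lies but leaves the load-bearing lemma unproved.
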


This is a large improvement compared to the cubic running time
by Imai and Iri and by Godau and also to the best currently claimed
running time bound of $\oh(n^{5/2+\varepsilon})$~\cite{buchin2022efficient},
compared to which it is also simpler.
It is also faster than the  $\oh(n^2 \log^2 n)$ running time of
the weak simplification algorithm by Guibas et al.\ 
(\cite{Guibas1993}, Theorem~14) by a logarithmic factor.
However, we remark that parts of their algorithm
(Def.~4, Theorem~7, Lemma~8, Lemma~9) can be used
to obtain the same result as described in their article
and we partially re-use their techniques.
Yet, their procedure is more complicated since it maintains
more geometric information only needed for weak simplifications.
Our algorithm is hence more straight-forward for the setting of 
polyline simplification under the local \fre distance,
which makes it conceptually easier to understand.

Furthermore, we show that under the \lone and \linf norm,
the wavefront has constant complexity which improves the running time to $\oh(n^2)$.
We  argue that for
a natural class of polylines, a quadratic running time can be achieved as well.
To this end, we introduce the concept of \emph{$\nu$-light} polylines;
see \cref{sec:small-wave-fronts}.

\begin{restatable}{theorem}{lonelinftheorem}
	\label{clm:lonelinf-n2}
	A two-dimensional $n$-vertex polyline can be simplified optimally under the
	local \fre distance in the \lone and \linf norm in $\oh(n^2)$ time and $\oh(n)$ space.
\end{restatable}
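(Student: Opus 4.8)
The plan is to run the very algorithm that establishes \cref{clm:ltwo-n2logn} and to show that, under the \lone and \linf norms, every wavefront operation costs only $\oh(1)$ time rather than $\oh(\log n)$. Recall that the logarithmic factor in the Euclidean algorithm stems solely from the wavefront: there it is a sequence of circular arcs whose number can grow to $\Theta(n)$, so each update and each feasibility query performs a binary search in the balanced search tree storing it. Everything else already fits in $\oh(n^2)$ time and $\oh(n)$ space: for each of the $n$ start vertices we grow a single wavefront and read off the valid shortcut endpoints, and we relax each discovered shortcut edge immediately into a shortest-path table for the (directed acyclic) shortcut graph, so that neither the graph nor more than $\oh(n)$ auxiliary values are ever stored. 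It therefore suffices to prove that the wavefront has constant complexity in these two norms, after which a direct scan replaces every binary search.

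I would first treat the \linf norm, whose unit ball is an axis-aligned square; the circular arcs of the Euclidean wavefront then degenerate into straight segments drawn from only two orientations, horizontal and vertical. The crux of the proof---and the step I expect to be the main obstacle---is the structural lemma that under \linf the wavefront consists of only a constant number of segments. The argument I have in mind is that the wavefront is a \emph{convex chain} (the relevant boundary of an intersection of $\delta$-balls) and that each of its edges lies on the boundary of some unit ball, hence is horizontal or vertical; a convex chain built from only two edge orientations can have at most a constant number of edges. The delicate point is to verify, using the wavefront properties developed in \cref{sec:wavefront}, that the \fre monotonicity updates which distinguish this wavefront from a plain intersection of balls still produce only axis-aligned edges and keep the chain convex, so that no edges of other slopes are ever created.

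Granting this lemma, each wavefront update and each query is answered by scanning its $\oh(1)$ segments, so the shortcut graph is constructed in $\oh(n^2)$ time while the space stays $\oh(n)$, since one start vertex is processed at a time and the wavefront has constant size. Finally I would reduce the \lone case to the \linf case. The linear map $T(x,y)=(x-y,\,x+y)$ is a $45^\circ$ rotation composed with a scaling by $\sqrt{2}$, and for every vector $u=(a,b)$ one has $\max(|a-b|,|a+b|)=|a|+|b|$; that is, the \linf length of $T(u)$ equals the \lone length of $u$, so $T$ is an isometry from the plane equipped with \lone onto the plane equipped with \linf. Since $T$ preserves both the pointwise distances that enter the local \fre distance and the vertex-subsequence structure of the polyline, an optimal \lone simplification of the input corresponds exactly to an optimal \linf simplification of the image of the input polyline under $T$. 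Running the \linf algorithm on this image therefore yields the claimed $\oh(n^2)$ time and $\oh(n)$ space for both norms.
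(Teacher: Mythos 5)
Your proposal is correct and takes the same overall route as the paper: both reduce \cref{clm:lonelinf-n2} to the structural claim that under \lone and \linf the wavefront has constant complexity (the paper's \cref{lem:wave-front-l1-linf-two-segments}: it is always one or two orthogonal segments), after which the balanced search tree is replaced by a constant-size list, every query and update costs $\oh(1)$, and the $\oh(n)$-space bound follows from the same interleaving of graph construction and shortest-path computation as in \cref{sec:imai-iri}. The difference is in how that key lemma is justified. The paper's proof is a short induction: the valid region of $W_{i,j}$ is the intersection of the valid region of $W_{i,j-1}$ with the local valid region of $D_{i,j}$, and in \linf both are bounded by one or two axis-parallel segments, so the update amounts to intersecting axis-parallel rectangles (quadrants), which stays of that form; the extra \fre narrowing step is harmless because it only moves the wedge rays, which are excluded from the wavefront by definition. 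Your convexity argument can be completed, but the parenthetical ``intersection of $\delta$-balls'' is the wrong object and hides the point where the norm actually enters: the valid region is not contained in the unit balls (points beyond the top arcs are valid), and in \ltwo the region above a bottom arc is \emph{not} convex~-- which is precisely why the wavefront can have $\Theta(n)$ arcs there, so any argument that works verbatim for balls would prove too much. What makes \lone and \linf special is that the local valid region (the part of the wedge above the bottom arc of a square) is an axis-aligned quadrant or half-plane; these are convex and closed under intersection, which simultaneously gives the convexity and the two-orientation bound you need, resolving the ``delicate point'' you flag. Finally, your explicit isometry $T(x,y)=(x-y,\,x+y)$ reducing \lone to \linf is correct and is a clean formalization of what the paper handles with the remark that the \lone situation is ``the same but rotated by 45 degrees.''
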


Besides this new application for the local \fre distance, investigating
the structure and implementation of \emph{wedges} and \emph{wavefronts} is of independent interest
and may also help to understand better the work
by Melkman and O'Rourke~\cite{melkman1988polygonal} from 1988
and the algorithm by Guibas et al.~\cite{Guibas1993},
who both employ these data structures but give little detail
on its structural properties and on how to perform operations with the wavefront.
To this end, we have an extensive appendix with detailed proofs,
which we have separated from the main part to keep the algorithm
description more compact and understandable on its own,
while the interested reader can find the additional content in the appendix.
Statements whose proofs can be found
in the appendix are marked with ``\appmark''.

\section{Preliminaries}
We start with some basic definitions in the field of polyline simplification,
then recapitulate the important ingredients of the polyline simplification algorithms
by Imai and Iri~\cite{Imai1988}, Melkman and O'Rourke~\cite{melkman1988polygonal},
as well as Guibas et al.~\cite{Guibas1993},
and finally specify the notation we use thereupon.

\subsection{Basic Definitions}
A \emph{polyline}
is a series of line segments that are defined
by a sequence of $d$-dimensional points $L = \langle p_1, p_2, \dots, p_n \rangle$,
which we call \emph{vertices}.
By $n$, we denote the \emph{length} of a polyline.
For $1 \le i \le j \le n$,
we let $L[p_i, p_j] := \langle p_i, p_{i+1}, \dots, p_j \rangle$,
that is, the subpolyline of $L$ starting
at vertex $p_i$, ending at vertex $p_j$, and including all vertices in between in order.
The continuous (but not smooth) curve induced by the vertices of a polyline $L$ is denoted
as $c_{L} \colon [1, n] \rightarrow \mathbb{R}^d$
with $c_{L} \colon x \mapsto (\lfloor x \rfloor + 1 - x) p_{\lfloor x \rfloor}
+ (x - \lfloor x \rfloor) p_{\lceil x \rceil}$.
The polyline simplification problem is  defined as follows;
for an example see \cref{fig:example-simplification}.

\begin{figure}
	\centering
	\includegraphics[]{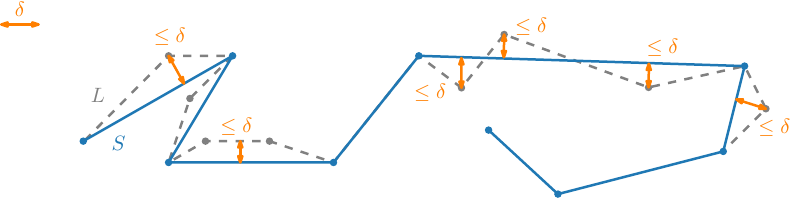}
	\caption{A polyline~$L$ (dashed gray) and a simplification~$S$ of $L$ (solid blue) according to a distance measure with distance parameter $\delta$.}
	\label{fig:example-simplification}
\end{figure}

\begin{definition}[Polyline Simplification]
	Given a polyline $L = \langle p_1, p_2, \dots, p_n \rangle$,
	a distance measure~$d_{X}$ for determining the distance between two polylines,
	and a distance threshold parameter~$\delta$,
	the objective is to obtain a minimum-size subsequence $S$ of $L$
	such that 
	$p_1, p_n \in S$, and
	$d_{X}(L, S) \le \delta$.
	We refer to $S$ as a \emph{simplification} of the (original) polyline $L$.
\end{definition}

Next, we discuss typical candidates for such a distance measure~$d_{X}$,
namely the \hau and the \fre distance in their local and their global variant.
\begin{definition}[\hau Distance]
	Given two polylines $L = \langle p_1, \dots, p_n \rangle$
	and $L' = \langle q_1, \dots, q_m \rangle$,
	the \emph{(undirected) \hau distance}~$d_\textnormal{H}(L, L')$ is defined as
	\begin{equation*}
	d_{\textnormal{H}}(L, L') := \max \left\{
	\adjustlimits\sup_{p \in c_{L}} \inf_{q \in c_{L'}} d(p, q),
	\adjustlimits\sup_{q \in c_{L'}} \inf_{p \in c_{L}} d(p, q)
	\right\} \, ,
	\end{equation*}
	where
	$\sup$ is the supremum, $\inf$ is the infimum and
	$d(p,q)$ is the distance between the points $p$ and $q$ under some norm
	(e.g., the \ltwo norm).
\end{definition}

An often raised criticism concerning the use of the \hau distance
is that it does not reflect the similarity of the courses of two polylines.
In contrast, the \fre distance measures the maximum distance
between two polylines while traversing them in parallel and is
therefore often regarded as the better suited measure
for polyline similarity.

\begin{definition}[\fre Distance]
	Given two polylines $L = \langle p_1, \dots, p_n \rangle$
	and $L' = \langle q_1, \dots, q_m \rangle$,
	the \emph{\fre distance}~$d_\textnormal{F}(L, L')$ is defined as
	\begin{equation*}
	d_\textnormal{F}(L, L') := \adjustlimits\inf_{\alpha, \beta} \max_{t \in [0, 1]}
	d(c_{L}(\alpha(t)), c_{L'}(\beta(t))),
	\end{equation*}
	where $\alpha \colon [0, 1] \rightarrow [1,n]$ and $\beta \colon [0, 1] \rightarrow [1, m]$
	are continuous and non-decreasing functions such that
	$\alpha(0) = \beta(0) = 1$, $\alpha(1) = n$, and $\beta(1) = m$.
\end{definition}

Observe that the \hau distance is a lower bound for the \fre distance.
Later, when considering the \fre distance, we may say
that the distance threshold $\delta$ is respected or exceeded already in the \hau distance
since not exceeding $\delta$ for the \hau distance is a necessary
condition to not exceed $\delta$ for the \fre distance.

Conventionally, in the context of polyline simplification,
the \emph{local} \hau and \emph{local} \fre distance is used,
which only measures the \hau or \fre distance between
a line segment $\langle p_i,p_j \rangle$ of the simplification and its corresponding subpolyline $L[p_i,p_j]$ in the original polyline.
Similar to the advantage of the \fre distance over the \hau distance,
the advantage of a local over a global distance measure is
that the we compare only related parts of a polyline and its simplification.
Hence, using the local \fre distance for polyline simplification is an arguably sensible choice.

When using a local distance measure,
we can tell for each pair of vertices~$\langle p_i, p_j \rangle$ (for $1 \le i < j \le n$) in the original polyline independently
whether a simplification may contain the line segment $\langle p_i, p_j \rangle$ or not
by only considering the distance between the line segment~$\langle p_i, p_j \rangle$ and its
corresponding subpolyline.
When considering such a pair $\langle p_i, p_j \rangle$ 
as a line segment for a simplification, we call it a \emph{shortcut}.
If the 
distance between a segment $\langle p_i, p_j \rangle$  and its corresponding subpolyline
does not exceed the distance threshold~$\delta$, we call it a \emph{valid} shortcut.
Note that trivially $\langle p_i, p_{i + 1} \rangle$ is always a valid shortcut
for any $i \in \{1, \dots, n-1\}$.

\begin{figure}
	\centering
	\includegraphics[]{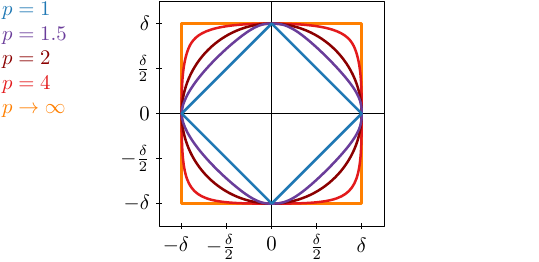}
	\caption{Unit circles in \lp norms for selected values of $p$.
	As their radius, we use $\delta$ instead of 1.}
	\label{fig:norm}
\end{figure}

In the definition of the \hau and \fre distance, we can choose how
the distance between two $d$-dimensional points, or vectors, is determined.
Typically, a vector norm is used for this purpose.
For $p \in [1, \infty)$, the \lp norm of a vector $x \in \mathbb{R}^d$ is
defined as $\|x\|_p := \left(\sum_{i=1}^d \left|x_i\right|^p\right)^{1/p}$.
For $p=1$, it is called the Manhattan norm, for $p=2$, the Euclidean norm.
For $p \rightarrow \infty$, \linf is called the maximum norm and
it is defined as $\max_{i=1,\dots,n} |x_i|$.
The unit sphere $S^d_p$ is the set of points in $\mathbb{R}^d$ within
unit distance to the origin. While this unit is conventionally set to $1$,
we use $\delta$ here instead as this allows for easier integration
to polyline simplification with error bound $\delta$.
We hence define $S_p^d := \{x \in \mathbb{R}^d|~ \|x\|_p \le \delta\}$.
For $d=2$, $S_p^2$ is also called the \emph{unit circle} in \lp;
we illustrate some examples in \cref{fig:norm}.
For \lone and \linf, it actually forms a square with side length
$\sqrt 2 \delta$ and $2\delta$, respectively.
For \ltwo, it is indeed a circle with radius $\delta$.
For $p$  between $2$ and $\infty$, it forms a supercircle which for
larger $p$ resembles more and more a square.
We refer to 
a contiguous subset of the boundary of a unit circle in \lp as an \emph{arc}.

\subsection{Imai--Iri Algorithm
}
\label{sec:imai-iri}
\label{fasterpolyline:sec:imai-iri}
Given an $n$-vertex polyline~$L = \langle p_1,\dots, p_n \rangle$,
the polyline simplification algorithm by Imai and Iri~\cite{Imai1988} proceeds in two phases.
In the first phase, the \emph{shortcut graph} is constructed.
This graph  has a node for each vertex of~$L$ and it has an edge between two nodes
if and only if there is a valid shortcut between the corresponding two vertices of~$L$.
For the \hau and the \fre distance,
it can be checked in $\oh(n)$ time whether the distance
between a line segment and a polyline having $\oh(n)$ vertices exceeds $\delta$~\cite{Alt1995}.
Hence, the total running time of the first phase amounts to $\oh(n^3)$.
In the second phase, a shortest path from the first node $p_1$
to the last node $p_n$  is computed in the shortcut graph,
which can be accomplished in $\oh(n^2)$ time.
In a naive implementation, the space consumption is in $\oh(n^2)$.
However, it is not necessary to first construct the full shortcut graph and to compute the shortest path subsequently.
Instead, the space consumption can be reduced to $\oh(n)$ by interleaving the two phases as follows:
For $p_i$, the shortest path distance $d_i$ from $p_i$ to $p_n$ via shortcuts
can be computed in linear time by considering all valid shortcuts $\langle p_i, p_j \rangle$
to vertices $p_j$ with $j > i$ and setting $d_i = 1 +\min_{\langle p_i,p_j \rangle} d_j$.
Hence, if the vertices are traversed in reverse order,
only the distance values for already processed vertices
and the shortcuts of the currently considered vertex
need to be kept in memory to compute the correct solution without
increasing the asymptotic running time.

\subsection{Melkman--O'Rourke Algorithm
}
Since in the Imai--Iri algorithm the construction of the shortcut graph dominates the runtime,
accelerating this first phase also leads to an overall improvement.
Melkman and O'Rourke~\cite{melkman1988polygonal} introduced a faster technique
to compute the shortcut graph for the local \hau distance.
Starting once at each vertex $p_i$ for $i \in \{1, \dots, n\}$,
they traverse the rest of the polyline vertex by vertex in $\oh(n \log n)$ time
to determine all valid shortcuts originating at~$p_i$.

To this end, they maintain a cone-shaped region called \emph{wedge}
in which all valid shortcuts are required to lie.
When traversing the polyline, the wedge may become narrower iteratively.
Moreover, they maintain a \emph{wavefront},%
\footnote{%
	Melkman and O'Rourke~\cite{melkman1988polygonal} use the term
	\emph{frontier} instead of wavefront.
	Within the cone, they only call the region on
	the other side of the frontier \emph{wedge}
	and they call the associated data structure
	\emph{wedge data structure.}
	Our notation to call the whole cone \emph{wedge}
	is in line with the algorithm by Chan and Chin~\cite{Chan1996}.
}
which is a sequence of circular arcs of unit circles.
The wavefront subdivides the wedge
into two regions~-- a valid shortcut~$\langle p_i, p_j \rangle$ has the endpoint~$p_j$
in the region not containing~$p_i$.
In other words, a valid shortcut needs to cross the wavefront.
The wavefront has size in $\oh(n)$ and is stored in a balanced search tree (in the original article an augmented 2-3-tree) 
such that querying and updating operations can be performed in amortized $\oh(\log n)$ time.

Containment in the wedge can be checked in constant time
and the position of a vertex relative to the wavefront can be determined in $\oh(\log n)$ time.
Updating the wedge 
can be done in constant time.
Updating the wavefront may involve adding an arc and removing several arcs.
Here, the crucial observation~\cite{melkman1988polygonal} is that the order of arcs on the wavefront
is reverse to the order of the corresponding unit circle centers~--
all with respect to the angle around~$p_i$.
This allows for binary search in $\oh(\log n)$ time
to locate a new arc within the wavefront.
Although a linear number of arcs may be removed from the wavefront
in a single step, over all steps any arc is removed at most once.
Amortized, this results in a running time of $\oh(n \log n)$ per
starting vertex $p_i$ and $\oh(n^2 \log n)$ in total.

\subsection{Algorithm by Guibas, Hershberger, Mitchell, and Snoeyink
}
\label{sec:guibas}
Guibas et al.~\cite{Guibas1993} study weak polyline simplification.
There, given an $n$-vertex polyline $L=\langle p_1, \dots,p_n \rangle$ and a distance threshold $\delta$,
the objective is to compute any polyline $S=\langle q_1,\dots,q_m \rangle$ of smallest possible length~$m$
that hits all unit circles around the vertices in $L$ in the given order,
which they call \emph{ordered stabbing}.
To additionally have \fre distance at most~$\delta$ between $L$ and $S$,
each vertex $q_j$ of $S$ needs to be in distance $\le \delta$ to some point of $c_L$.
They describe a 2-approximation algorithm running in $\oh(n^2 \log n)$ time
and a dynamic program solving this problem exactly in $\oh(n^2 \log^2 n)$ time.

Both algorithms essentially rely on a subroutine
to decide whether there exists a line~$\ell$ (a \emph{stabbing line})
that intersects a given set of $n$ ordered unit circles $\langle C_1, \dots, C_n \rangle$
such that $\ell$ hits some points $\langle r_1, \dots, r_n \rangle$ with $r_i \in C_i$ for $i \in \{1, \dots, n\}$
in order (\cite{Guibas1993}, Def.~4).
This subroutine runs in $\oh(n \log n)$ time (\cite{Guibas1993}, Lemma~9).
It is based on an algorithm computing iteratively two hulls and
two limiting lines through the unit circles that describe
all stabbing lines (\cite{Guibas1993}, Algorithm~1).
They also maintain the wavefront as described in the Melkman--O'Rourke algorithm.
However, they add an update step
to ensure that the stabbing line respects the order of the unit circles\footnote{%
This is necessary because here the \fre distance is considered,
while Melkman and O'Rourke only considered the Hausdorff distance.}.
We use conceptually the same update step and explain it in more detail in
\cref{sec:algorithm-outline,sec:wavefront-maintenance}.

To compute an optimal weak simplification $S$ with the dynamic program,
this subroutine is called once per unit circle induced by vertices in $L$. 
Guibas et al.\ remark that the wavefront might have non-constant complexity.
Hence they refrain from storing it explicitly.
They only keep the wedge and support information in memory and construct
the remaining information necessary to perform the update steps on demand.
This further complicates the algorithm and adds a logarithmic factor
per unit circle to the overall running time, which then is in $\oh(n^2 \log^2 n)$.


\subsection{Definitions and Notation}

\begin{figure}
	\centering
	\begin{subfigure}[t]{\linewidth}
		\centering
		\includegraphics[page=1, trim = 0 0 0 0, clip]{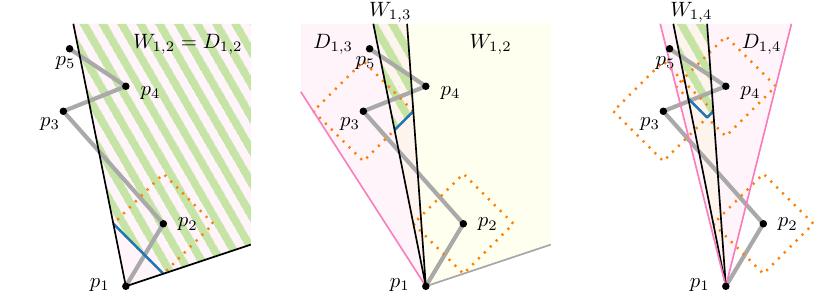}
		\caption{\lone norm: the unit circles are squares of side length~$\sqrt 2\delta$
			rotated by 45~degrees w.r.t.\ the main axes.}
		\label{fig:wedges-and-wave-fronts-L1}
	\end{subfigure}
	
	\medskip
	
	\begin{subfigure}[t]{\linewidth}
		\centering
		\includegraphics[page=2, trim = 0 0 0 0, clip]{wedges-and-wave-fronts}
		\caption{\ltwo norm: the unit circles are circles of radius~$\delta$.
			The wavefront consists of $\oh(n)$ circular arcs.}
		\label{fig:wedges-and-wave-fronts-L2}
	\end{subfigure}
	
	\medskip
	
	\begin{subfigure}[t]{\linewidth}
		\centering
		\includegraphics[page=3, trim = 0 0 0 0, clip]{wedges-and-wave-fronts}
		\caption{\linf norm: the unit circles are squares of side length~$2 \delta$
			whose boundary is parallel to the main axes.
		}
		\label{fig:wedges-and-wave-fronts-Linf}
	\end{subfigure}
	\caption{Iterative construction of the wedge in the \lone, \ltwo and \linf norm:
		The local wedges $D_{1,2}$, $D_{1,3}$, and $D_{1,4}$ are visualized in pink.
		Here, the intersections of local wedges define the wedges
		$W_{1,2}$, $W_{1,3}$, and $W_{1,4}$. 
		The wavefront is a sequence of
		unit circle arcs and is depicted in blue.
		Within the wedge and above the wavefront, there is the valid region (depicted in hatched green).
		This is the area, where a subsequent vertex~$p_j$ needs to lie if there is a valid shortcut~$\langle p_1, p_j \rangle$.
		For example, $\langle p_1, p_5 \rangle$ is a valid shortcut in the \linf norm,
		whereas in the \lone and \ltwo norm it is not.}
	\label{fig:wedges-and-wave-fronts}
\end{figure}

The following definitions are illustrated in \cref{fig:wedges-and-wave-fronts}.
When starting at $p_i$ and encountering $p_j$ during the traversal,
we denote by $D_{i, j}$ the \emph{local wedge} of $p_i$ and $p_j$
that is the area between the two tangential rays of the unit circle around $p_j$ emanating at $p_i$.
The (global) wedge $W_{i,j}$ is an angular region having its origin at $p_i$.
We define $W_{i, i}$ to be the whole plane and
each $W_{i, j}$ for $j > i$ is essentially
the intersection of all local wedges up to $D_{i, j}$.
We remark that, as mentioned in \cref{sec:guibas},
we apply an extra update step described in \cref{sec:algorithm-outline}
specific to the \fre distance, which may narrow the wedge 
when obtaining $W_{i,j}$ from $W_{i,j-1}$.
Therefore, $W_{i,j} \subseteq \bigcap_{k \in \{i + 1, i + 2, \dots, j\}} D_{i, k}$ holds.
We give a precise inductive definition of the wedge $W_{i,j}$
when we describe the algorithm in \cref{sec:algorithm-outline}.

Let $C_{j}$ be the unit circle around~$p_j$
and let $l_j$ ($r_j$) be the left (right)
tangential point of $C_j$ and $D_{i,j}$.
Between $l_j$ and $r_j$,
there are two arcs of $C_{j}$~-- the \emph{bottom arc} and the \emph{top arc}%
\footnote{W.l.o.g., we assume hereunder that $p_i$ is below $p_{i+1}$ and
	therefore at the bottom of a wedge.
	Moreover, w.l.o.g., we assume that $p_{i+1}$ has a distance of at
	least~$\delta$ to~$p_i$ because otherwise, we could ignore all vertices
	following~$p_i$ and having distance~$\le \delta$ to~$p_i$ since they are in
	$\delta$-distance to any shortcut~$\langle p_i, p_j \rangle$.
	Note, though, that a vertex~$p_j$ with $j > i + 1$
	could have distance $\le \delta$ to~$p_i$.
	Then, we define $D_{i, j}$ as the whole plane
	and the whole boundary of $C_j$ as its \emph{top arc}.}.
Clearly, any ray emanating at $p_i$ intersects the bottom and the top arc at most once each.
We call the bottom arc of $C_{j}$ between
$l_j$ and $r_j$ the \emph{wave} of~$D_{i,j}$.
We call the region within~$D_{i,j}$ and above and on its wave the \emph{local valid region} of $D_{i,j}$.
We call the region within~$W_{i,j}$ and above and on the wavefront the \emph{valid region}
of $W_{i,j}$ (for $W_{i,i}$ the whole plane).

The wavefront itself is defined inductively.
The \emph{wavefront} of~$W_{i, j}$ (for $j > i$) is the boundary of the intersection of
the valid region of~$W_{i, j - 1}$ and the local valid region of~$D_{i,j}$ within $W_{i, j}$
and excluding the boundary of $W_{i, j}$.
Intuitively, it is the wavefront of $W_{i, j - 1}$ within $W_{i, j}$
where we cut out the bottom arc of~$C_j$.

\begin{figure}
	\centering
	\begin{subfigure}[t]{0.46  \linewidth}
		\centering
		\includegraphics[page=2]{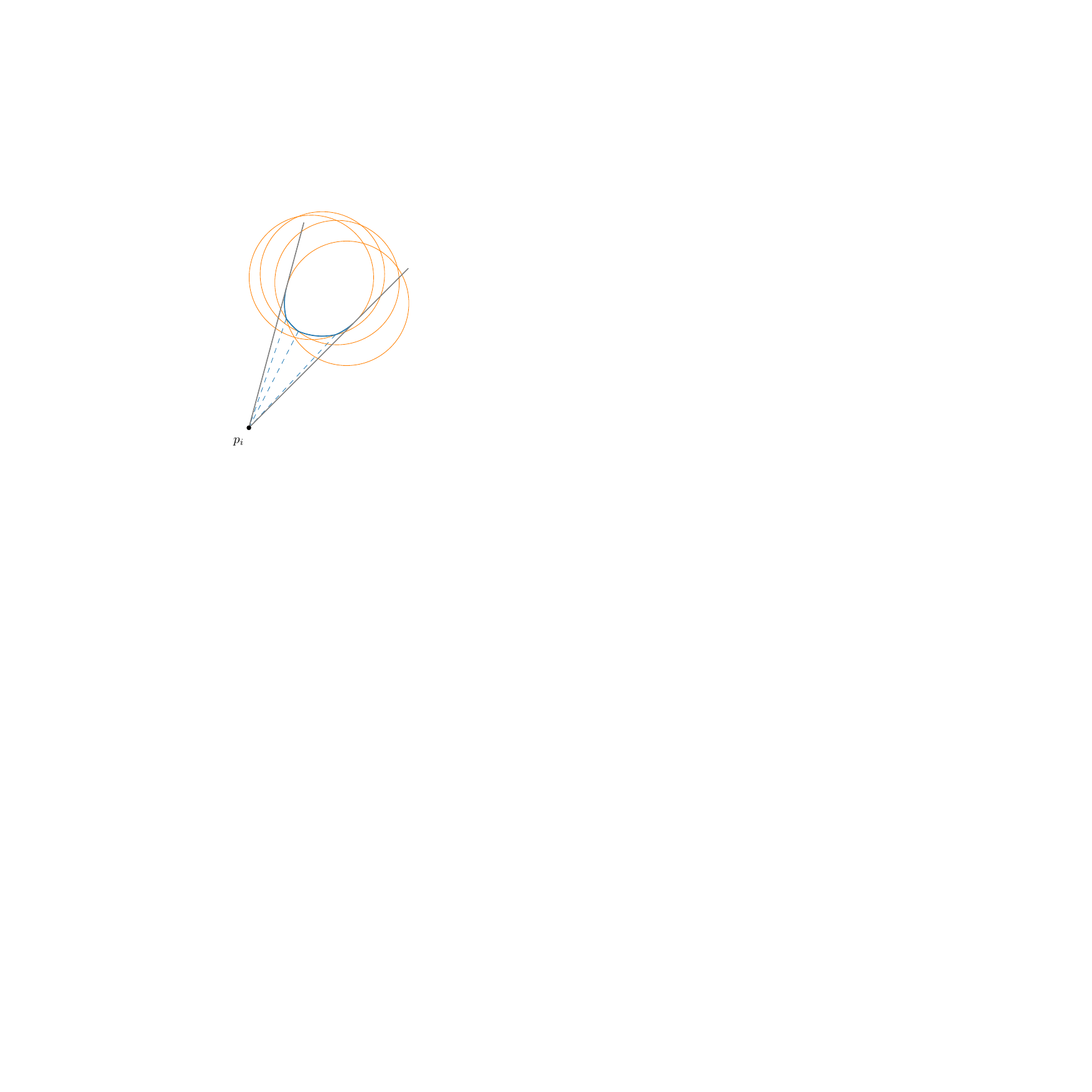}
		\caption{When encountering $p_j$,
			we update the wedge in two steps~--
			even if $p_j$
			lies outside the wedge.}
		\label{fig:update-out}
	\end{subfigure}
	\hfill
	\begin{subfigure}[t]{0.46 \linewidth}
		\centering
		\includegraphics[page=3]{vertex-behind-wave-front}
		\caption{
			Vertex $p_j$ contributes an arc to the new wavefront.
			Here, $\langle p_i, p_j \rangle$ is also a valid shortcut.
		}
		\label{fig:update-in}
	\end{subfigure}
	
	\caption{Updating the wedge and its wavefront in the \ltwo norm.}
	\label{fig:update-wave-front-and-wedge}
\end{figure}

\section{Local-\fre Simplification Algorithm in Near-Quadratic Time}
\label{sec:algorithm}
In this section, we describe how to obtain our
algorithm for polyline simplification under the local \fre distance  running
in near-quadratic time by means of
Melkman and O'Rourke~\cite{melkman1988polygonal} and integrating ideas from Guibas et al.~\cite{Guibas1993}.

\subsection{Outline}
\label{sec:algorithm-outline}
As all Imai--Iri based algorithms, we build the shortcut graph by
traversing the given polyline $n$ times~-- starting once from each vertex~$p_i$
for $i \in \{1, \dots, n\}$
and determining all shortcuts starting at~$p_i$.
For each~$p_i$, we construct a wedge with a wavefront,
whose properties are analyzed in more detail in \cref{sec:wavefront}.

Next, we describe how to determine,
for each vertex~$p_i$,
the set of subsequent vertices to which $p_i$ has a valid shortcut.
We traverse the polyline in order $p_{i+1}, p_{i+2}, \dots, p_n$.
During this traversal, we maintain the wedge
in which all valid shortcuts need to lie.
This would, as in the algorithm by Chan and Chin~\cite{Chan1996},
suffice to assure that the \hau distance threshold is not violated
(which is a lower bound for the \fre distance).
To also not exceed the \fre distance threshold, we use the wavefront.
As in the algorithm by Melkman and O'Rourke,
the invariant maintained is that for a valid shortcut from $p_i$ to $p_j$ with $j > i$,
the vertex $p_j$ has to be within the valid region of the wedge~$W_{i, j-1}$.
In this case, we add the directed edge~$p_ip_j$ to the shortcut graph.

Then, regardless of whether $\langle p_i, p_j \rangle$ is a valid shortcut or not,
we first update the wedge~$W_{i, j-1}$
to an intermediate wedge~$W'_{i,j}$ by computing the intersection
between $W_{i, j-1}$ and the local wedge~$D_{i,j}$.
Afterwards, we update the intermediate wedge~$W'_{i, j}$ to the wedge $W_{i, j}$ and
we update the wavefront.

This update step is illustrated for the \ltwo norm in \cref{fig:update-wave-front-and-wedge}
and for multiple steps and multiple norms in \cref{fig:wedges-and-wave-fronts}.
For the \ltwo norm and for the \lone and \linf norms, we give more detail on this
update step in \cref{sec:wavefront-maintenance,sec:small-wave-fronts-lone-linf}, respectively.
It works as follows.
A valid shortcut $\langle p_i, p_k \rangle$ with $k > j$ in the \fre distance
needs to go through the intersection region~$I$
between the current valid region 
and the unit circle~$C_j$ around~$p_j$.
Otherwise, the vertices of the subpolyline from $p_i$ to $p_k$
would be encountered in the wrong order contradicting the
definition of the \fre distance.
Hence, we narrow the intermediate wedge~$W'_{i,j}$ such that
the rays~$R_l$ and $R_r$ emanating at $p_i$ and enclosing~$I$
constitute the wedge~$W_{i,j}$; see \cref{fig:update-out}.
This extra narrowing step is also applied by Guibas et al.\ in
their line stabbing algorithm, but not by Melkman and O'Rourke.
For the \hau distance,
it is irrelevant in which order the intermediate points of
a shortcut are encountered by the shortcut segment.

Thereafter, we update the wavefront as Melkman and O'Rourke do.
The part of the bottom arc of the unit circle~$C_j$ around $p_j$
that is above the current wavefront is included into the new wavefront.
Pictorially, the wavefront is moving upwards.
For an example see \cref{fig:update-in}.
There, we compute the intersection point~$s$ between~$C_j$ and the
wavefront and replace the arcs $a'_t$ and $a'_{t+1}$
of the wavefront by the arcs $a_t$ (which is a part of $a'_t$)
and $a_{t+1}$ (which is a part of $C_j$).
There can be up to two intersection points between~$C_j$ and the wavefront.

If the valid region becomes empty,
we abort the search for further shortcuts from $p_i$.

\subsection{Correctness}
To show that the algorithm works correctly, we prove two things:
that all shortcuts the algorithm finds are valid
(\cref{lem:all-shortcuts-found-are-valid})
and that the algorithm finds all valid shortcuts
(\cref{lem:all-valid-shortcuts-are-found}).
As it is more difficult to show this directly, we first state five helpful lemmas,
which we use throughout the remainder of this manuscript.

\newcounter{bottomArcsDontIntersectTwice}
\setcounter{bottomArcsDontIntersectTwice}{\value{theorem}}
\begin{restatable}[{\hyperref[clm:bottom-arcs-dont-intersect-twice*]{\appmark}}]{lemma}{bottomArcsDontIntersectTwice}
	\label{clm:bottom-arcs-dont-intersect-twice}
	Given two unit circles and a point~$p$ outside of the unit circles.
	If the two bottom arcs (with respect to~$p$)
	intersect, then the second intersection point 
	is between their top arcs.
\end{restatable}

\newcounter{twoIntersectionFromBelow}
\setcounter{twoIntersectionFromBelow}{\value{theorem}}
\begin{restatable}[{\hyperref[lem:unit-circle-wave-front-two-intersections-from-below*]{\appmark}}]{lemma}{twoIntersectionFromBelow}
	\label{lem:unit-circle-wave-front-two-intersections-from-below}
	If a unit circle $C$ intersects the wavefront more than once,
	then on the left side of the leftmost intersection point~$s_1$
	(relative to rays originating in $p_i$)
	and on the right side of the rightmost intersection point~$s_2$,
	$C$ is below the wavefront.
	In other words, the intersection pattern depicted in
	\cref{fig:unit-circle-intersects-wave-front-not-from-above} cannot occur.
\end{restatable}

\newcounter{waveFrontInsideUnitCircle}
\setcounter{waveFrontInsideUnitCircle}{\value{theorem}}
\begin{restatable}[{\hyperref[lem:unit-circles-of-the-wave-front-contain-whole-wave-front*]{\appmark}}]{lemma}{waveFrontInsideUnitCircle}
	\label{lem:unit-circles-of-the-wave-front-contain-whole-wave-front}
	Consider the wavefront of~$W_{i,k}$.
	For every vertex $p_j$ ($i < j \le k$) whose unit circle~$C_j$ contributes
	an arc of the wavefront of~$W_{i,k}$, the wavefront of~$W_{i,k}$
	lies completely inside~$C_j$.
\end{restatable}
\cref{lem:unit-circles-of-the-wave-front-contain-whole-wave-front} directly implies the following lemma.
\begin{lemma}
	\label{lem:point-on-wave-front-reachable-by-pj}
	Let $q$ be a point lying on the wavefront of the wedge~$W_{i, j}$.
	Then, $d(p_j, q) \le \delta$.
\end{lemma}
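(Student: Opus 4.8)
The plan is to treat this statement as a corollary of \cref{lem:unit-circles-of-the-wave-front-contain-whole-wave-front}, instantiated with $k=j$ and with the contributing vertex being $p_j$ itself. If I can argue that the circle $C_j$ of the \emph{most recently processed} vertex governs the freshly updated wavefront of $W_{i,j}$, then that lemma tells me the whole wavefront of $W_{i,j}$ lies inside $C_j$, and hence every point $q$ on it satisfies $d(p_j,q)\le\delta$ directly by the definition of the unit circle. So the only real content is to show that the update by $C_j$ really does control the new wavefront.

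To make this precise I would work ray by ray. Recall from the inductive definition and the update in \cref{sec:algorithm-outline} that the valid region of $W_{i,j}$ is the intersection of the valid region of $W_{i,j-1}$ with the local valid region of $D_{i,j}$; since both are ``above'' their lower boundaries, the lower boundary of the intersection — the new wavefront — is, along each ray $R_\theta$ emanating from $p_i$ inside $W_{i,j}$, the pointwise upper envelope (in distance from $p_i$) of the old wavefront and the wave of $D_{i,j}$, i.e.\ the bottom arc of $C_j$. Writing $t_\mathrm{old}(\theta)$ for the distance from $p_i$ to the old wavefront along $R_\theta$ and $t_\mathrm{arc}(\theta)\le t_\mathrm{top}(\theta)$ for the distances to the near (bottom-arc) and far (top-arc) crossings of $R_\theta$ with $\partial C_j$, the new wavefront meets $R_\theta$ at distance $\max\{t_\mathrm{old}(\theta),t_\mathrm{arc}(\theta)\}$. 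Because the algorithm has not aborted, the valid region is nonempty and $W_{i,j}=[R_l,R_r]$ is exactly the angular span of the intersection region $I$ between the current valid region and $C_j$.

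Now I would finish the check. If $\max\{t_\mathrm{old}(\theta),t_\mathrm{arc}(\theta)\}=t_\mathrm{arc}(\theta)$, the wavefront point lies on $\partial C_j$, hence at distance exactly $\delta$ from $p_j$. Otherwise the maximum equals $t_\mathrm{old}(\theta)$; here I use that $R_\theta$, lying inside $W_{i,j}$, meets $I$, so the interval $[\max\{t_\mathrm{old},t_\mathrm{arc}\},\,t_\mathrm{top}]$ describing $I$ along $R_\theta$ is nonempty, giving $t_\mathrm{arc}(\theta)\le t_\mathrm{old}(\theta)\le t_\mathrm{top}(\theta)$. Thus the wavefront point sits between the bottom and top arcs, i.e.\ inside $C_j$, so $d(p_j,q)<\delta$. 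Either way the bound holds, and this simultaneously confirms that $C_j$ contributes the governing portion of the wavefront, making \cref{lem:unit-circles-of-the-wave-front-contain-whole-wave-front} applicable. The one step needing care — and the main obstacle — is the implication ``$R_\theta$ inside $W_{i,j}$'' $\Rightarrow$ ``$R_\theta$ meets $I$'': this fails only if $I$ is angularly disconnected, so that $[R_l,R_r]$ contains rays on which $I$ is empty (the degenerate case where the bottom arc of $C_j$ lies entirely below the old wavefront, letting $t_\mathrm{old}$ escape past $t_\mathrm{top}$ and leave $C_j$). I would rule this out by invoking the intersection-pattern lemma \cref{lem:unit-circle-wave-front-two-intersections-from-below}, together with \cref{clm:bottom-arcs-dont-intersect-twice}, which constrain how $C_j$ can meet the old wavefront and thereby force $I$ to be angularly connected, closing the argument.
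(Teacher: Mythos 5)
Your proposal is correct, and it fills in what the paper leaves implicit rather than reproducing an argument the paper actually writes down: the paper's entire proof is the remark that \cref{lem:unit-circles-of-the-wave-front-contain-whole-wave-front} ``directly implies'' the statement, the intended mechanism (visible only inside the appendix proof of that lemma) being that, by construction, the wavefront of $W_{i,j}$ consists of arcs of $C_j$ together with pieces of the old wavefront lying inside $I \subseteq C_j$. Your ray-by-ray envelope argument proves exactly this containment, and it is more careful than the paper's packaging in two respects. First, the hypothesis of \cref{lem:unit-circles-of-the-wave-front-contain-whole-wave-front}---that $C_j$ contributes an arc---can in fact fail (e.g.\ Case~\textsf{MM} with no intersection points, where the wavefront is unchanged and lies strictly inside $C_j$), so the statement is not literally an instantiation of that lemma with $k=j$ and vertex $p_j$; your direct argument covers this case anyway, so only your closing claim that the lemma thereby ``becomes applicable'' is slightly off, and harmlessly so, since your envelope argument already yields $d(p_j,q)\le\delta$ on its own. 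Second, you correctly isolate the one genuine subtlety, the angular connectivity of $I$, which the paper never mentions. To actually close that step, note that an escape of the old wavefront above the top arc of $C_j$ strictly inside the span of $I$ forces, by continuity, two crossings of the old wavefront with the top arc of $C_j$ strictly inside the span, while at the bounding rays $R_l$ and $R_r$ the wavefront is not above $C_j$ (there $I$ is nonempty); \cref{lem:wave-front-has-at-most-two-intersection-points-with-another-circle} then pins these two crossings down as the leftmost and rightmost intersection points, and just outside them the wavefront lies inside $C_j$, so the top arc of $C_j$ is above the wavefront there, contradicting \cref{lem:unit-circle-wave-front-two-intersections-from-below}. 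In other words, you should invoke \cref{lem:wave-front-has-at-most-two-intersection-points-with-another-circle} alongside the two lemmas you name; with that substitution your plan goes through completely.
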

%

\begin{lemma}
	\label{lem:wave-front-only-moves-away}
	Let $R$ be a ray emanating at $p_i$ and lying inside the
	wedges $W_{i, j}$ and $W_{i, k}$ for some $i < j < k$.
	Moreover, let $q_j$ and $q_k$ be the intersection points
	between $R$ and the wavefronts of $W_{i, j}$ and $W_{i, k}$, respectively.
	Then, $d(p_i, q_j) \le d(p_i, q_k)$.
\end{lemma}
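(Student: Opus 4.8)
The plan is to reduce the claim to two structural facts about valid regions: that they form a nested family as more vertices are processed, and that along any ray through~$p_i$ the valid region is a single sub-ray whose endpoint is exactly the wavefront intersection. Write $V_j$ for the valid region of~$W_{i,j}$ and recall from the inductive definition of the wavefront that, for $j > i$,
\[
V_j = V_{j-1} \cap (\text{local valid region of } D_{i,j}) \cap W_{i,j},
\]
with $V_i$ the whole plane, since ``above and on the wavefront within $W_{i,j}$'' is exactly this intersection.

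First I would show the nesting $V_k \subseteq V_j$ for all $j < k$. Each step from $V_{\ell-1}$ to $V_\ell$ intersects with the local valid region of~$D_{i,\ell}$ and with the (possibly narrowed) wedge~$W_{i,\ell}$, and both operations only remove points. In particular, the extra \fre-specific narrowing that turns the intermediate wedge~$W'_{i,\ell}$ into~$W_{i,\ell}$ (described in \cref{sec:algorithm-outline}) merely shrinks the angular region, so it too can only delete points. Composing the steps gives $V_k \subseteq V_{k-1} \subseteq \dots \subseteq V_j$. Unfolding the recursion also yields $V_j = \bigcap_{i < \ell \le j} (\text{local valid region of } D_{i,\ell}) \cap W_{i,j}$, which I use next.

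Second, fix the ray~$R$, which by hypothesis lies in $W_{i,k} \subseteq W_{i,j}$, and hence, since $W_{i,j} \subseteq \bigcap_{\ell} D_{i,\ell}$, inside every local wedge~$D_{i,\ell}$ with $\ell \le k$. For each such~$\ell$ whose local wedge is non-degenerate, $R$ crosses the wave (the bottom arc of~$C_\ell$) exactly once, because any ray from~$p_i$ meets a bottom arc at most once; the local valid region of~$D_{i,\ell}$, being the part of~$D_{i,\ell}$ on and above this wave, therefore meets~$R$ in the sub-ray $\{x \in R : d(p_i,x) \ge d(p_i,b_\ell)\}$, where $b_\ell = R \cap (\text{wave of } D_{i,\ell})$. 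Degenerate local wedges (from vertices within distance~$\delta$ of~$p_i$, whose whole circle boundary counts as a top arc) contribute no wave and hence all of~$R$. Intersecting these sub-rays and intersecting with $R \cap W_{i,j} = R$ shows that $V_j \cap R = \{x \in R : d(p_i,x) \ge \max_{\ell \le j} d(p_i,b_\ell)\}$, a single sub-ray whose endpoint is precisely the wavefront intersection~$q_j$; the same description holds for $V_k \cap R$ with endpoint~$q_k$ (both exist by hypothesis).

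Finally I would combine the two facts. From $V_k \cap R \subseteq V_j \cap R$ we get $\{x \in R : d(p_i,x) \ge d(p_i,q_k)\} \subseteq \{x \in R : d(p_i,x) \ge d(p_i,q_j)\}$, and this inclusion of sub-rays forces $d(p_i, q_j) \le d(p_i, q_k)$, as claimed. The main obstacle, and the part requiring the most care, is the second fact: certifying that the valid region is star-shaped from~$p_i$, so that each ray meets the wavefront exactly once and ``above the wavefront'' genuinely means ``farther along the ray.'' This is where the degenerate cases and the fact that the wavefront is assembled only from bottom arcs must be handled cleanly, whereas the nesting established in the first fact is exactly what lets the wedge-narrowing step be absorbed without any additional argument.
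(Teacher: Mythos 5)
Your proof is correct and takes essentially the same route as the paper's: both arguments rest on the nesting of valid regions obtained by unfolding the inductive definition (your $V_k \subseteq V_j$), together with the fact that along a ray from $p_i$ the valid region is exactly the portion at or beyond the wavefront crossing. The paper compresses this into a short contradiction argument and leaves the sub-ray characterization implicit (``below the wavefront'' meaning outside the valid region), whereas you prove that characterization explicitly from the bottom-arc structure of the waves; this is a difference in level of detail, not of approach.
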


\begin{proof}
	Assume for contradiction that $d(p_i, q_j) > d(p_i, q_k)$.
	Then, $q_k$ is below the wavefront of $W_{i,j}$
	and, hence, $q_k$ does not lie in the valid region of $W_{i,j}$
	but in the valid region of~$W_{i,k}$.
	However, the valid region of~$W_{i,k}$ is the intersection
	of the local valid region of~$D_{i,k}$ and all previous valid regions
	including $W_{i,j}$ and, thus, the valid region of $W_{i,k}$ is a subset of $W_{i,j}$.
\end{proof}

Putting \cref{lem:wave-front-only-moves-away} in other words,
the wavefront may only move away but never towards $p_i$
during the execution of the algorithm.
We are now ready to prove the correctness of the algorithm
by the following two lemmas.

\begin{lemma}
	\label{lem:all-shortcuts-found-are-valid}
	Any shortcut found by the algorithm
	is valid under the local \fre distance
	and any \lp norm with $p \in [1, \infty]$.
\end{lemma}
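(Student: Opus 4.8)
The plan is to establish validity by exhibiting an explicit monotone matching between the reported shortcut segment $\langle p_i, p_j \rangle$ and its subpolyline $L[p_i, p_j]$ whose width is at most~$\delta$. The starting observation is a convexity fact that holds for every \lp norm: if $r_k \preceq r_{k+1}$ are two points appearing in this order along the segment with $d(p_k, r_k) \le \delta$ and $d(p_{k+1}, r_{k+1}) \le \delta$, then matching the edge $\langle p_k, p_{k+1}\rangle$ linearly to the subsegment $\langle r_k, r_{k+1}\rangle$ keeps the two traversing points within~$\delta$. Indeed, the point at parameter~$t$ contributes the difference vector $(1-t)(p_k - r_k) + t(p_{k+1} - r_{k+1})$, and by the triangle inequality together with homogeneity its length is at most $(1-t)\delta + t\delta = \delta$. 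Hence it suffices to produce points $p_i = r_i \preceq r_{i+1} \preceq \cdots \preceq r_{j-1} \preceq r_j = p_j$ lying in this order along the segment with $d(p_k, r_k) \le \delta$ for every $i \le k \le j$; stitching the linear edge matchings together then yields monotone reparametrizations certifying $d_\textnormal{F}(\langle p_i, p_j\rangle, L[p_i, p_j]) \le \delta$.

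First I would define these stabbing points from the wavefronts. Because the algorithm reported the shortcut, $p_j$ lies in the valid region of $W_{i, j-1}$, so $p_j$ is inside the angular region $W_{i,j-1}$ and above its wavefront. As the wedges are nested, $W_{i,j-1} \subseteq W_{i,k}$ for all $k \le j-1$, so the whole segment $\langle p_i, p_j\rangle$ stays inside every wedge $W_{i,k}$ and therefore crosses its wavefront (which spans the full angular width of the wedge). Let $r_k$ be the intersection of the segment with the wavefront of $W_{i,k}$ for $i < k < j$, and set $r_i := p_i$ and $r_j := p_j$. Then \cref{lem:point-on-wave-front-reachable-by-pj} gives $d(p_k, r_k) \le \delta$ since $r_k$ lies on the wavefront of $W_{i,k}$, while the endpoint conditions $d(p_i, r_i) = d(p_j, r_j) = 0$ are trivial.

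It remains to verify the ordering $r_i \preceq r_{i+1} \preceq \cdots \preceq r_j$. Parametrizing the segment by distance from~$p_i$, \cref{lem:wave-front-only-moves-away} shows that the crossing distances are non-decreasing, so $r_{i+1}, \dots, r_{j-1}$ appear in order along the segment; moreover each lies strictly between $p_i$ and $p_j$, since $r_{i+1}$ is at positive distance from~$p_i$ and $p_j$ lies above the outermost wavefront (that of $W_{i,j-1}$), placing $r_{j-1}$ at distance at most $d(p_i, p_j)$ along the segment. Thus the full sequence is monotone, and combined with the distance bounds and the convexity argument above, the lemma follows.

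The main obstacle I anticipate is making the crossing argument fully rigorous rather than merely geometrically plausible: I must argue that the segment meets each wavefront exactly once (passing from below the wavefront into the valid region), that the wavefront really extends across the entire angular width of its wedge so that this crossing exists, and that the degenerate vertices $p_k$ with $d(p_i, p_k) \le \delta$ — for which $C_k$ contributes no wave and $D_{i,k}$ is defined as the whole plane — still admit a suitable $r_k$; here one can simply reuse the previous crossing point, whose distance bound is again supplied by \cref{lem:point-on-wave-front-reachable-by-pj}. Since all three ingredients (convexity of the norm, \cref{lem:point-on-wave-front-reachable-by-pj}, and \cref{lem:wave-front-only-moves-away}) hold verbatim for every \lp norm with $p \in [1,\infty]$, the resulting bound is norm-independent, exactly as claimed.
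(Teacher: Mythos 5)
Your proposal is correct and follows essentially the same route as the paper's own proof: both construct the matching points as the intersections of $\overline{p_i p_j}$ with the successive wavefronts, bound their distances via \cref{lem:point-on-wave-front-reachable-by-pj}, and obtain monotonicity from \cref{lem:wave-front-only-moves-away}. The only difference is that you spell out what the paper dismisses with ``clearly''~-- the linear-interpolation/triangle-inequality argument showing that ordered stabbing points within distance~$\delta$ yield a \fre matching, plus the degenerate case $d(p_i,p_k)\le\delta$~-- which is added rigor, not a different approach.
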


\begin{proof}
	Let $\langle p_i, p_k \rangle$ be a shortcut found by the algorithm.
	We show that there is a mapping of the vertices $\langle p_{i+1}, p_{i+2}, \dots, p_{k-1} \rangle$
	onto points $\langle m_{i+1}, m_{i+2}, \dots, m_{k-1} \rangle$,
	such that $m_j \in \overline{p_i p_k}$ and $d(p_j, m_j) \le \delta$
	for every $j \in \{i+1, \dots, k-1\}$, and
	$m_{j}$ precedes or equals $m_{j + 1}$ for every $j \in \{i+1, \dots, k-2\}$
	when traversing $\overline{p_i p_k}$ from $p_i$ to $p_k$.
	Clearly, this implies that also the \fre distance between each pair of
	line segments $\overline{p_{j} p_{j+1}}$ and $\overline{m_j m_{j+1}}$
	is at most $\delta$ and, hence, $\langle p_i, p_k \rangle$ is a valid shortcut.
	In the remainder of this proof, we describe how to obtain
	$m_{i+1}, m_{i+2}, \dots, m_{k-1} \in \overline{p_i p_k}$.
	To this end, we consider the wedge $W_{i, j}$
	and the corresponding wavefront for each $j \in \{i+1, \dots, k-1\}$, i.e.,
	for each intermediate step when executing the algorithm.
	By construction of the algorithm, $\overline{p_i p_k}$ lies inside the wedge $W_{i, j}$
	and $p_k$ lies above its wavefront
	(since $p_k$ lies in the valid region of $W_{i, k - 1}$
	and, by \cref{lem:wave-front-only-moves-away},
	the wavefront has never moved towards~$p_i$).
	Let $m_j$ be the intersection point of $\overline{p_i p_k}$ and
	the wavefront of $W_{i, j}$.
	By \cref{lem:point-on-wave-front-reachable-by-pj}, $d(p_j, m_j) \le \delta$.
	Moreover, by \cref{lem:wave-front-only-moves-away},
	$m_{j}$ precedes or equals $m_{j + 1}$ for any $j \in \{i+1, \dots, k-2\}$
	when traversing $\overline{p_i p_k}$ from $p_i$ to~$p_k$.
\end{proof}

\begin{lemma}
	\label{lem:all-valid-shortcuts-are-found}
	All valid shortcuts under the local \fre distance and any \lp norm with
	$p \in [1, \infty]$ are found by the algorithm.
\end{lemma}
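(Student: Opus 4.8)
The plan is to prove the statement as the converse of \cref{lem:all-shortcuts-found-are-valid}: given a valid shortcut $\langle p_i, p_k \rangle$, I show that when the traversal started at $p_i$ reaches $p_k$, the vertex $p_k$ lies in the valid region of $W_{i,k-1}$, so the algorithm adds the edge $p_i p_k$. Since the shortcut is valid, the local \fre distance between $\overline{p_i p_k}$ and $L[p_i,p_k]$ is at most $\delta$; by the standard reparametrization characterization of the \fre distance this yields points $m_{i+1} \preceq m_{i+2} \preceq \dots \preceq m_{k-1}$ occurring in this monotone order along $\overline{p_i p_k}$ with $d(p_j, m_j) \le \delta$ for every $j$. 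Setting $m_i := p_i$ and $m_k := p_k$, the whole sequence $m_i \preceq \dots \preceq m_k$ is monotone along the segment, so $d(p_i, m_{j'}) \le d(p_i, m_j)$ whenever $j' \le j$, and every $m_j$ lies on the ray $R$ from $p_i$ through $p_k$.

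I would then run an induction over $j = i, i+1, \dots, k-1$ with the invariant that \emph{$R$ lies inside $W_{i,j}$ and $p_k$ lies in the valid region of $W_{i,j}$}. The base case $j=i$ is immediate, since $W_{i,i}$ and its valid region are the whole plane. For the inductive step the decisive point is to show that the matched point $m_j$ lands in the region $I_j$ used for the order-preserving narrowing, namely the intersection of $C_j$ with the part of $W'_{i,j}$ lying above the (not-yet-updated) wavefront of $W_{i,j-1}$. Indeed, $m_j \in C_j$ since $d(p_j, m_j) \le \delta$; moreover $m_j \in W'_{i,j} = W_{i,j-1} \cap D_{i,j}$, because $R \subseteq W_{i,j-1}$ by hypothesis and $R$ meets $C_j$ at $m_j$ and hence lies in $D_{i,j}$; and $m_j$ lies above the wavefront of $W_{i,j-1}$, since on $R$ that wavefront is the farthest crossing of the bottom arcs of $C_{i+1},\dots,C_{j-1}$, whereas for each such $j'$ the point $m_{j'} \in C_{j'}$ lies on $R$ with $d(p_i, m_j) \ge d(p_i, m_{j'})$, forcing $d(p_i,m_j)$ to be at least that bottom-arc crossing (by \cref{lem:point-on-wave-front-reachable-by-pj} and the definition of the wave). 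Thus $m_j \in I_j$, so the narrowing rays $R_l, R_r$ enclosing $I_j$ enclose $R$, giving $R \subseteq W_{i,j}$. Finally, because $d(p_i,p_k) = d(p_i,m_k) \ge d(p_i,m_{j'})$ is at least every bottom-arc crossing on $R$ for all $j' \le j$, the point $p_k$ stays above the updated wavefront as well (cf.\ \cref{lem:wave-front-only-moves-away}), so $p_k$ remains in the valid region of $W_{i,j}$ and the invariant is restored. Taking $j = k-1$ yields exactly the condition under which the algorithm reports the shortcut.

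I expect the main obstacle to be precisely the order-preserving narrowing step~-- the ingredient that Melkman and O'Rourke omit and Guibas et al.\ add~-- since it is the only place where a valid shortcut could in principle be discarded; the remaining operations (intersecting with $D_{i,j}$ and moving the wavefront upward) can only fail to retain $p_k$ if an earlier part already does. Certifying that the narrowing never cuts off $R$ reduces exactly to locating $m_j$ inside $I_j$, which is where the monotonicity $m_{j'} \preceq m_j$ of the \fre matching is essential. A minor point to dispatch is that the valid region never becomes empty before $p_k$ is processed, which is automatic once we know $p_k$ itself lies in the valid region of each $W_{i,j}$ for $j < k$; I would also briefly absorb the degenerate cases flagged in the footnotes (e.g.\ a vertex within distance $\delta$ of $p_i$, for which $D_{i,j}$ is the whole plane) into the same argument.
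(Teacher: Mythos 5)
Your proposal is correct, but it takes a genuinely different route from the paper's. The paper argues by contradiction with a positional case analysis of where a missed endpoint $p_k$ could lie: outside $\bigcap_{j} D_{i,j}$ (then already the \hau bound, a lower bound for the \fre bound, is violated); inside that intersection but outside the narrowed wedge $W_{i,k-1}$ (then the ray through $p_k$ enters and leaves $C_j$ before entering the circle $C_{j'}$ whose wavefront arc bounds the region $I$, so any matching would visit $p_j$ and $p_{j'}$ in the wrong order); or inside $W_{i,k-1}$ but below the wavefront (then the whole segment $\overline{p_i p_k}$ stays outside some $C_j$, so $p_j$ cannot be matched at all). You instead argue forward: you extract a monotone \fre matching $m_{i+1} \preceq \dots \preceq m_{k-1}$ from the assumed validity and push it through the algorithm by induction, maintaining that the ray $R$ through $p_k$ survives every narrowing (because $m_j$ lands in $I_j$) and that $p_k$ stays above every wavefront (because $d(p_i,p_k) \ge d(p_i, m_{j'})$ is at least each bottom-arc crossing). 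The two proofs are essentially contrapositives of one another, but yours makes explicit exactly where monotonicity of the matching is consumed~-- at the order-preserving narrowing step~-- and it dualizes cleanly with the paper's proof of \cref{lem:all-shortcuts-found-are-valid}, which conversely builds a matching out of the wavefronts; the paper's version, in exchange, exhibits for each rejection mechanism of the algorithm a concrete geometric witness of invalidity. Two points in your write-up deserve tightening: the claim that, along $R$, the wavefront of $W_{i,j-1}$ sits at the farthest bottom-arc crossing of $C_{i+1},\dots,C_{j-1}$ relies on the (easy, inductive) identity that the valid region of $W_{i,j-1}$ equals $W_{i,j-1}$ intersected with all local valid regions, which is precisely where your hypothesis $R \subseteq W_{i,j-1}$ is needed; and the appeal to \cref{lem:point-on-wave-front-reachable-by-pj} there is misplaced~-- what you actually use is only that the bottom arc is the near boundary of the disk $C_{j'}$ as seen from $p_i$, so any point of $R$ inside $C_{j'}$ lies at or beyond that crossing.
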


\begin{proof}
	Suppose for the sake of a contradiction that there is a valid shortcut~$\langle p_i, p_k \rangle$
	that was not found by the algorithm.
	
	If $p_k$ lay outside of $\bigcap_{j \in \{i + 1, i + 2, \dots, k-1\}} D_{i, j}$,
	then there would be some $p_{j'}$ with $i < j' < k$ such that
	$d(p_{j'}, \overline{p_i p_k}) > \delta$.
	So, as in the algorithm by Chan and Chin~\cite{Chan1996},
	already the \hau distance requirement would be violated
	and $\langle p_i, p_k \rangle$ would be no valid shortcut.
	Hence, $p_k$ lies inside~$\bigcap_{j \in \{i + 1, i + 2, \dots, k-1\}} D_{i, j}$.
	
	Suppose now that $p_k$ lies inside $\bigcap_{j \in \{i + 1, i + 2, \dots, k-1\}} D_{i, j}$ but outside $W_{i, k - 1}$.
	W.l.o.g.\ $p_k$ lies to the left of the wedge $W_{i, k - 1}$.
	We know that there is some $p_j$ with $i < j < k$ for which
	the extra narrowing step from \cref{sec:algorithm-outline} has been applied
	such that $p_k$ lies to the left of $W_{i, j}$.
	For constructing $W_{i, j}$, we have considered the intersection area~$I$
	between $C_j$ and the wavefront of $W_{i, j-1}$.
	The left endpoint of~$I$ lies on the boundary of $W_{i, j}$
	and is the intersection point between $C_j$ and an arc of the wavefront
	of $W_{i, j-1}$ belonging to a vertex $p_{j'}$ with $i < j' < j$.
	Now consider the ray $R$ that we obtain by extending $\overline{p_i p_k}$ at $p_k$.
	When traversing~$R$, we first enter and leave the interior of $C_{j}$
	before we enter the interior of~$C_{j'}$.
	Hence, the \fre distance between $\overline{p_i p_k}$ and $L[p_i, p_k]$
	is greater than $\delta$ due to the order of $p_{j'}$ and $p_j$ within $L[p_i, p_k]$.
	Therefore, $p_k$ lies inside~$W_{i, k-1}$.
	
	Finally, suppose that $p_k$ lies inside $W_{i, k-1}$ but not in
	the valid region, i.e., $p_k$ lies below the wavefront of $W_{i, k - 1}$.
	Since $p_k$ is below the wavefront, the line segment~$\overline{p_i p_k}$
	does not intersect the wavefront (otherwise, we would violate
	\cref{lem:wave-front-has-at-most-one-intersection-point-with-a-ray};
	see below).
	Again, consider the ray $R$ that we obtain by extending $\overline{p_i p_k}$ at $p_k$.
	Let the intersection point of $R$ and the wavefront of $W_{i, k - 1}$ be~$w$.
	The point~$w$ lies on an arc of the wavefront.
	This arc is part of the bottom arc of a unit circle~$C_j$
	belonging to some $p_j$ with $i < j < k$.
	Since it is the bottom arc, $p_k$ lies outside $C_j$ and $d(p_k, p_j) > \delta$.
	
	Therefore, $p_k$ lies in the valid region of $W_{i, k - 1}$.
	However, these are precisely the vertices for which the algorithm adds a shortcut.
\end{proof}

\section{The Wavefront Data Structure}
\label{sec:wavefront}
At the heart of the algorithm lies the maintenance of the wavefront.
To show that the algorithm can be implemented to run in $\oh(n^2 \log n)$ time,
we next analyze the properties of the wavefront and discuss how to store
and update it using a suitable (simple) data~structure.

\subsection{Size of the Wavefront}
\label{sec:wavefront-size}
We  first prove that the wavefront always has a size in $\oh(n)$. This insight is based on two properties proven in the following lemmas.
\begin{lemma}
	\label{lem:wave-front-has-at-most-one-intersection-point-with-a-ray}
	Any ray emanating at $p_i$ intersects the wavefront at most once.
\end{lemma}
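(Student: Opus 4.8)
The statement to prove is Lemma~\ref{lem:wave-front-has-at-most-one-intersection-point-with-a-ray}: any ray emanating at $p_i$ intersects the wavefront at most once.

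The plan is to argue by induction on the construction of the wavefront, since the wavefront of $W_{i,j}$ is defined inductively from the wavefront of $W_{i,j-1}$ by cutting in the bottom arc (wave) of $C_j$ and then applying the narrowing step. The base case is trivial: the wavefront of $W_{i,i+1}$ is a single arc (the wave of $D_{i,i+1}$), and any ray from $p_i$ meets a single arc at most once, as already noted in the text (``any ray emanating at $p_i$ intersects the bottom and the top arc at most once each''). For the inductive step, I would fix a ray $R$ from $p_i$ lying inside $W_{i,j}$ and assume, as the inductive hypothesis, that $R$ meets the wavefront of $W_{i,j-1}$ at most once. The new wavefront of $W_{i,j}$ is obtained by replacing the portion of the old wavefront that lies below the wave of $C_j$ with the corresponding piece of that wave. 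The key point is that along $R$ the new wavefront point is the pointwise maximum (in distance from $p_i$) of the old wavefront and the wave of $C_j$, so it is still a single well-defined point on $R$.

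The crux is showing that this pointwise-maximum description is consistent, i.e.\ that the wave of $C_j$ itself meets $R$ at most once and that combining it with the old single intersection does not create a second crossing. The first half follows directly from the fact that the wave is part of the bottom arc of $C_j$, which any ray from $p_i$ crosses at most once. For the second half I would invoke Lemma~\ref{lem:unit-circle-wave-front-two-intersections-from-below}: when $C_j$ meets the old wavefront more than once, outside the extreme intersection points $C_j$ lies below the wavefront, which guarantees that the region where the wave rises above the old wavefront is a single contiguous angular interval around $p_i$. Consequently, on any ray $R$ the ``winner'' switches between the old wavefront and the new wave at most at one boundary, so the combined (upper) boundary still crosses $R$ exactly where the higher of the two curves does, namely once. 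I would also use Lemma~\ref{lem:wave-front-only-moves-away} to note that the new wavefront is never closer to $p_i$ than the old one along $R$, reinforcing that we are genuinely taking an upper envelope.

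I expect the main obstacle to be handling the narrowing step cleanly, because after cutting in the wave of $C_j$ we additionally restrict the wedge to the rays enclosing the intersection region~$I$. I would argue that narrowing only removes rays from consideration and never introduces new arcs along a surviving ray, so it cannot increase the number of intersections of any remaining ray $R$ with the wavefront; a ray that survives narrowing lies inside $W_{i,j}$ and its single intersection from the inductive step is preserved. The remaining care is purely geometric bookkeeping: verifying that the contiguity of the angular interval from Lemma~\ref{lem:unit-circle-wave-front-two-intersections-from-below} really does prevent the upper envelope from oscillating. Once that contiguity is in hand, the ``at most one intersection'' conclusion for $R$ follows immediately, completing the induction.
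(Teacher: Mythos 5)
Your core argument is exactly the paper's: induction over the construction of the wavefront, base case a single bottom arc, and, for the step, the observation that along any fixed ray $R$ the new wavefront meets $R$ only at the entry point of the intersection of the old valid region with the local valid region of $D_{i,j}$ --- i.e., at the farther of the two points $q$ (on the old wavefront, unique by the induction hypothesis) and $q'$ (on the wave of $C_j$, unique because it is a bottom arc). The paper's proof consists of this and nothing more. Where your write-up goes astray is in declaring that the ``crux'' is a contiguity property of the angular interval on which the wave of $C_j$ rises above the old wavefront, derived from \cref{lem:unit-circle-wave-front-two-intersections-from-below}, and in making the conclusion contingent on it (``once that contiguity is in hand\dots''). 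That detour is both unnecessary and, as derived, gapped. Unnecessary: the statement is about one ray at a time, and along a single ray the intersection of two regions, each of which is entered at most once (i.e., each is a sub-ray of $R$), is again entered at most once; how the upper envelope switches between the two curves across \emph{different} angles is irrelevant and cannot create a second crossing on any individual ray. Gapped: \cref{lem:unit-circle-wave-front-two-intersections-from-below} only asserts that $C_j$ lies below the old wavefront outside the extreme intersection points; by itself it does not exclude, say, four crossings with an alternating above/below pattern in between, which would destroy contiguity. To rule that out you would also need \cref{lem:wave-front-has-at-most-two-intersection-points-with-another-circle}, which the paper states only \emph{after} the present lemma (its proof happens to be independent, so there is no formal circularity, but the dependency would be backwards relative to the paper's development --- and, again, it is not needed at all).

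The repair is simply deletion: your second paragraph's pointwise-maximum observation, together with your (correct, and slightly more explicit than the paper's) remark that the narrowing step only restricts to a subwedge and hence only removes wavefront points along surviving rays, already completes the induction. The appeal to \cref{lem:wave-front-only-moves-away} can be dropped for the same reason --- it compares wavefronts across steps and adds nothing to the per-ray count.
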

\begin{proof}
	We prove this statement inductively.
	As $W_{i, i + 1} = D_{i, i + 1}$, consider the wave of~$D_{i, i + 1}$.
	Since the unit circle in any \lp norm for $p \in [1, \infty]$ is convex,
	any ray emanating at $p_i$ intersects a unit circle at most twice.
	The first intersection is with the bottom
	arc of the unit circle~$C_{i+1}$ and the second intersection
	is with the top arc of~$C_{i+1}$.
	As the wave of $D_{i, i + 1}$ is defined as the bottom arc of~$C_{i+1}$,
	any ray emanating at~$p_i$ intersects the wave of $D_{i, i + 1}$ at most once.
	
	It remains to show the induction step for all $j > i + 1$.
	By the induction hypothesis, we know that any ray emanating at $p_i$
	intersects the wavefront of $W_{i, j - 1}$ at most once.
	The wavefront of~$W_{i, j}$ is the boundary of the intersection of
	the valid region of~$W_{i, j - 1}$ and the local valid region of~$D_{i,j}$.
	Consider a ray~$R$ originating at $p_i$.
	The ray~$R$ enters the valid region of~$W_{i, j - 1}$ at most
	at one point~$q$ where it also intersects the wavefront
	of~$W_{i, j - 1}$, and it enters the local valid region of~$D_{i,j}$
	at most at one point~$q'$ where it also intersects the wave of~$D_{i,j}$.
	Hence, $R$ enters the intersection of the valid region of~$W_{i, j - 1}$ and the local valid
	region of~$D_{i,j}$ at most at one point~-- namely either at $q$ or at~$q'$ (or $q = q'$).
	This is the only point of the wavefront of~$W_{i, j}$ that is shared with~$R$.
\end{proof}

We can make a similar statement for unit circles.
The number of intersection points between a unit circle and the wavefront
is important for updating the wavefront.
\todo{as noted by a reviewer: maybe add an intuitive argument
why the following lemma is true.}

\newcounter{atMostTwice}
\setcounter{atMostTwice}{\value{theorem}}
\begin{restatable}[{\hyperref[lem:wave-front-has-at-most-two-intersection-points-with-another-circle*]{\appmark}}]{lemma}{atMostTwice}
	\label{lem:wave-front-has-at-most-two-intersection-points-with-another-circle}
	Any unit circle of radius~$\delta$ intersects the wavefront at most twice.
\end{restatable}

From \cref{lem:wave-front-has-at-most-two-intersection-points-with-another-circle}
it follows that in each step, the size of the wavefront increases at most by~2.
This leads us to the following lemma.

\begin{lemma}
	\label{lem:wave-front-has-at-most-linear-size}
	The wavefront consists of at most~$\oh(n)$ arcs under any L$_{p \in (1, \infty)}$ norm.
\end{lemma}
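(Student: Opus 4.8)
The plan is to turn the per-step bound on newly created arcs into a global linear bound by a simple accounting argument over the $\oh(n)$ update steps performed for a fixed start vertex~$p_i$. First I would fix the base case: the wavefront of $W_{i, i+1} = D_{i, i+1}$ is the single bottom arc (the wave) of~$C_{i+1}$, so it consists of exactly one arc. The wavefront is then obtained by processing the vertices $p_{i+2}, p_{i+3}, \dots, p_n$ one by one, updating it exactly once per vertex as described in \cref{sec:algorithm-outline}. It therefore suffices to bound the number of arcs added in a single update step by a constant.

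Next I would bound the increase of the arc count when processing a vertex~$p_j$. By \cref{lem:wave-front-has-at-most-two-intersection-points-with-another-circle}, the bottom arc of~$C_j$ meets the current wavefront in at most two points~$s_1$ and~$s_2$, and by \cref{lem:unit-circle-wave-front-two-intersections-from-below} the new bottom arc lies above the old wavefront exactly on the single angular interval between~$s_1$ and~$s_2$ (to the left of~$s_1$ and to the right of~$s_2$ it is below the wavefront and hence discarded). Consequently the update inserts precisely one new arc~-- the piece of the bottom arc of~$C_j$ between~$s_1$ and~$s_2$~-- while every old arc lying strictly between~$s_1$ and~$s_2$ is deleted. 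The only configuration in which the count can grow by more than one is that both~$s_1$ and~$s_2$ lie in the interior of the same old arc~$a$; then~$a$ is split into its two surviving sub-arcs and the count grows by~$2$. In every other configuration at least one of the two boundary arcs is merely truncated, and some arcs may even be removed, so the increase is at most~$1$ (and the degenerate cases of zero or one intersection point add at most one arc as well). In all cases the arc count grows by at most~$2$, which is exactly the observation stated immediately after \cref{lem:wave-front-has-at-most-two-intersection-points-with-another-circle}.

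Finally I would sum over the steps. There are at most $n - i - 1 < n$ update steps, each increasing the number of arcs by at most~$2$, starting from a single arc; hence the wavefront contains at most $1 + 2(n-1) = \oh(n)$ arcs at every point in time, for the norm range $p \in (1, \infty)$ in which the invoked intersection lemma holds. The part that needs the most care is the middle step: one must invoke \cref{lem:unit-circle-wave-front-two-intersections-from-below} to guarantee that the admissible portion of the new bottom arc is a \emph{single} contiguous piece rather than several disconnected pieces sticking out above the wavefront, since only then is the number of inserted arcs bounded by the two intersection points. Once this is pinned down the counting is immediate; combined with \cref{lem:wave-front-only-moves-away}, which ensures that the wavefront only ever moves away from~$p_i$ so that a deleted arc is never recreated, the same bound also shows that the total number of arc insertions over the whole traversal is $\oh(n)$, which will later be used for the amortized running time.
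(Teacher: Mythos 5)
Your proof is correct and follows essentially the same route as the paper: start from a single arc, use \cref{lem:wave-front-has-at-most-two-intersection-points-with-another-circle} to bound the growth per update step by a constant (at most two), and sum over the $\oh(n)$ steps. Your additional appeal to \cref{lem:unit-circle-wave-front-two-intersections-from-below} to show that exactly one contiguous piece of the new bottom arc is inserted is the refinement the paper itself notes in a footnote (yielding the sharper bound of $n-1$ arcs), but it is not needed for the $\oh(n)$ bound, since even the pattern it excludes would add at most two arcs.
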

\begin{proof}
	According to the inductive definition, we start with a wavefront consisting of one arc.
	Now in each step where we extend the wavefront,
	we consider the intersection between the current valid region and a local valid region~--
	one	is defined by the current wavefront, the other is defined by a single arc~$a$.
	This is the intersection between the current wavefront
	and the unit circle on which~$a$ lies.
	By \cref{lem:wave-front-has-at-most-two-intersection-points-with-another-circle},
	we know that there are at most two intersection points.
	This means, the number of arcs on the wavefront increases by at most two.
	In the worst case, we start at vertex~$p_1$ and adjust the wavefront
	$n-1$ times until we have created the wavefront of~$W_{1, n}$.
	Therefore, any wavefront consists of at most $2 n - 3 \in \oh(n)$ arcs\footnote{%
		One can further observe that,
		by \cref{lem:unit-circle-wave-front-two-intersections-from-below},
		the number of arcs on the wavefront increases actually
		by at most one per vertex~$p_j$ ($j \in \{2, \dots, n\}$).
		This means any wavefront consists of at most $n-1$ arcs.
	}.
\end{proof}

\subsection{Wavefront Maintenance under the \ltwo Norm (Euclidean Norm)}
\label{sec:wavefront-maintenance}
As there might be a linear number of  arcs on the
wavefront,  we cannot simply iterate over all arcs
in each step of the algorithm since this would  require cubic time in total.
Therefore, we employ a data structure that allows for
querying, inserting, and removing an object in logarithmic time.
Similar to Melkman and O'Rourke, we use a
balanced search tree (e.g., a red-black tree)
where we store the circular arcs~\footnote{%
	To represent a circular arc, we store its corresponding unit circle center and the points where the arc starts and ends.}
of the wavefront.
The keys according to which the circular arcs are arranged in the search tree
are the angles of their starting points with respect to~$p_i$.
These angles cover a range of less than~$\pi$, hence, we may rotate the drawing
when computing the wavefronts of a vertex~$p_i$ to avoid ``jumps'' from $2 \pi$ to $0$.
We can then locate a point~$p_j$ relative to the wavefront
and add or remove an arc on the wavefront in logarithmic time.

Note that, different from Melkman and O'Rourke and similar to Guibas et al.,
we have an additional update step
where we determine the intersection region~$I$ and
potentially make the wedge narrower.
We show that we can update the wavefront in amortized logarithmic time
using a simple case distinction.
We compute the intersection area~$I$ only implicitly.
For an overview, see \cref{fig:update-wave-front}.

\begin{figure}
	\centering
	\begin{subfigure}[t]{.31 \linewidth}
		\centering
		\includegraphics[page=4, trim = 0 85 0 0, clip]{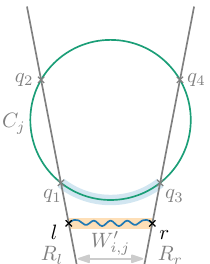}
		\smallskip
		\caption{\centering Case \textsf{TB}. \\ (cannot occur)}
		\label{fig:update-wave-front-case-tb}
	\end{subfigure}
	\hfill
	\begin{subfigure}[t]{.31\linewidth}
		\centering
		\includegraphics[page=7, trim = 0 85 0 0, clip]{update-wave-front}
		\smallskip
		\caption{\centering Case \textsf{TM}.}
		\label{fig:update-wave-front-case-tm}
	\end{subfigure}
	\hfill
	\begin{subfigure}[t]{.31\linewidth}
		\centering
		\includegraphics[page=9, trim = 0 85 0 0, clip]{update-wave-front}
		\smallskip
		\caption{\centering Case \textsf{TT}.}
		\label{fig:update-wave-front-case-tt}
	\end{subfigure}
	
	\smallskip
	
	\begin{subfigure}[t]{.31 \linewidth}
		\centering
		\includegraphics[page=2, trim = 0 85 0 0, clip]{update-wave-front}
		\smallskip
		\caption{\centering Case \textsf{MB}.}
		\label{fig:update-wave-front-case-mb}
	\end{subfigure}
	\hfill
	\begin{subfigure}[t]{.31\linewidth}
		\centering
		\includegraphics[page=6, trim = 0 85 0 0, clip]{update-wave-front}
		\smallskip
		\caption{\centering Case \textsf{MM}.}
		\label{fig:update-wave-front-case-mm}
	\end{subfigure}
	\hfill
	\begin{subfigure}[t]{.31\linewidth}
		\centering
		\includegraphics[page=8, trim = 0 85 0 0, clip]{update-wave-front}
		\smallskip
		\caption{\centering Case \textsf{MT}.}
		\label{fig:update-wave-front-case-mt}
	\end{subfigure}
	
	\bigskip
	
	\begin{subfigure}[t]{.31 \linewidth}
		\centering
		\includegraphics[page=1, trim = 0 85 0 0, clip]{update-wave-front}
		\smallskip
		\caption{\centering Case \textsf{BB}.}
		\label{fig:update-wave-front-case-bb}
	\end{subfigure}
	\hfill
	\begin{subfigure}[t]{.31\linewidth}
		\centering
		\includegraphics[page=3, trim = 0 85 0 0, clip]{update-wave-front}
		\smallskip
		\caption{\centering Case \textsf{BM}.}
		\label{fig:update-wave-front-case-bm}
	\end{subfigure}
	\hfill
	\begin{subfigure}[t]{.31\linewidth}
		\centering
		\includegraphics[page=5, trim = 0 85 0 0, clip]{update-wave-front}
		\smallskip
		\caption{\centering Case \textsf{BT}. \\ (cannot occur)}
		\label{fig:update-wave-front-case-bt}
	\end{subfigure}
	
	\smallskip
	
	\caption{
		Cases for updating the wavefront when a vertex $p_j$, whose 
		unit circle is $C_j$, is added.
		The wavefront of the previous step is illustrated as a blue wavy line.
		Green background color (everything inside $C_j$) highlights the parts that remain part of the wavefront
		and red and orange background color (everything outside $C_j$) highlight
		the parts that are removed from the wavefront.
		The part of the bottom arc of $C_j$ that becomes part of the wavefront is
		highlighted by blue background color.
	}
	\label{fig:update-wave-front}
\end{figure}

Say we are computing all valid shortcuts starting at $p_i$ and we are
currently processing a vertex $p_j$, which is the center of the unit circle~$C_j$.
We have already constructed the intermediate wedge $W'_{i, j}$
and clipped the wavefront of $W_{i, j-1}$ along the left and the
right bounding rays $R_l$ and $R_r$ of $W'_{i, j}$.
For this clipping, we may have removed a linear number of arcs,
however, over all iterations we remove every arc at most once.
Now, both $R_l$ and $R_r$ intersect~$C_j$ twice or touch~$C_j$.
Let $q_1$ and $q_2$ denote the intersection points
between $R_l$ and $C_j$ (where $q_1$ is on the bottom arc of~$C_j$).
Similarly, let $q_3$ and $q_4$ denote the intersection points
between $R_r$ and $C_j$ (where $q_3$ is on the bottom arc of~$C_j$).
Moreover, let $l$ and $r$ denote the intersection point between the wavefront
and $R_l$ and between the wavefront and $R_r$, respectively.

The relative positions of $l$, $q_1$, and $q_2$ on $R_l$
and the relative positions of $r$, $q_3$, and $q_4$ on $R_r$
determine where the intersection points~$s_1$ and $s_2$ (if they exist)
between $C_j$ and the wavefront of $W_{i, j-1}$ can lie.
(Recall that there are at most two intersection points by
\cref{lem:wave-front-has-at-most-two-intersection-points-with-another-circle}.)
In the following, we write $a \prec b$ if $a$ is below $b$ along the ray $R_l$ or $R_r$.
If $q_1 = l$ or $q_2 = l$, then we proceed as if $R_l$ was moved
to the right by a tiny bit (symmetrically as if $R_r$ was moved to the left).
Thus, at such a point, the angle of the incident arc of the wavefront
and~$C_j$ matters for the order.
For the degenerate case $q_1 = l = q_2$, which includes a touching point
between $R_l$ and $C_j$, we hence assume $q_1 \prec l \prec q_2$.

Next, we consider all orderings of $l$, $r$, $q_1$, $q_2$, $q_3$, and $q_4$.
This gives rise to the following nine cases.
We remark that two of them (Case~\textsf{TB} and Case~\textsf{BT})
cannot occur and two pairs of the remaining cases are symmetric,
which leaves essentially five different cases.

\medskip\noindent\textbf{(Case~\textsf{TB}:)} $q_1 \prec q_2 \prec l$ and
$r \prec q_3 \prec q_4$; see \cref{fig:update-wave-front-case-tb}.
This case cannot occur.
Suppose for a contradiction that we have this configuration.
Then, there are precisely two intersection points~$s_1$ and $s_2$ between the wavefront and $C_j$.
The left intersection point~$s_1$ is between the top arc of~$C_j$
and an arc $a_k$ of the wavefront, which is part of the bottom arc of a unit circle~$C_k$
belonging to a vertex $p_k$ with $i < k < j$.
By \cref{lem:unit-circles-of-the-wave-front-contain-whole-wave-front},
$C_k$ contains the whole wavefront, thus including~$s_2$,
which means $C_k$ and $C_j$ intersect a
second time such that this intersection point is to the left of $s_2$.
Then, however, $C_k$ and $C_j$ intersect once with both bottom arcs
and once with a bottom and a top arc~-- a contradiction to~\cref{clm:bottom-arcs-dont-intersect-twice}.
For more details on this argument, see in \cref{app:algorithm}
the proof of \cref{lem:unit-circles-of-the-wave-front-contain-whole-wave-front}
and \cref{fig:wave-front-inside-unit-circle}.

\medskip\noindent\textbf{Case~\textsf{TM}:} $q_1 \prec q_2 \prec l$ and
$q_3 \prec r \prec q_4$; see \cref{fig:update-wave-front-case-tm}.
There is an intersection point~$s_1$ between $C_j$ and the wavefront.
We traverse\footnote{%
	In the following we just say for short, ``we traverse the wavefront starting at $l$''.}
the arcs in the balanced search tree representing the wavefront starting at
the leftmost arc, which in turn starts at point $l$,
and remove all arcs that we encounter until we find the intersection
point~$s_1$ between $C_j$ and an arc~$a$ of the wavefront.
We update $a$ to start at $s_1$ and the left bounding ray of $W_{i, j}$ to go through $s_1$.
There cannot be a second intersection point because otherwise
there would also be a third intersection point between a unit circle and the wavefront.

\medskip\noindent\textbf{Case~\textsf{TT}:} $q_1 \prec q_2 \prec l$ and
$q_3 \prec q_4 \prec r$; see \cref{fig:update-wave-front-case-tt}.
In this case, we either have two or no intersection points between the wavefront and~$C_j$.
We traverse the wavefront starting at $l$ and remove all arcs that
we encounter and that do not intersect~$C_j$.
If we do not find any intersection point but reach~$r$, then
there cannot be any further valid shortcut starting at $v_i$ and we abort.
Otherwise, we have found $s_1$ and proceed symmetrically at $r$ to find $s_2$.

\medskip\noindent\textbf{Case~\textsf{MB}:} $q_1 \prec l \prec q_2$ and
$r \prec q_3 \prec q_4$; see \cref{fig:update-wave-front-case-mb}.
There is precisely one intersection point $s_1$ between the wavefront and the bottom arc of $C_j$.
We traverse the wavefront starting at $r$ and remove all arcs that
we encounter and that do not intersect~$C_j$ until we have found~$s_1$.
We clip the arc of the wavefront at $s_1$ and append the bottom arc
of~$C_j$ between $s_1$ and $r$ to the wavefront.

\medskip\noindent\textbf{Case~\textsf{MM}:} $q_1 \prec l \prec q_2$ and
$q_3 \prec r \prec q_4$; see \cref{fig:update-wave-front-case-mm}.
There are either two or no intersection points between the wavefront and $C_j$.
If there are two intersection points,
then they are on the bottom arc of $C_j$ as otherwise,
it would contradict \cref{lem:unit-circle-wave-front-two-intersections-from-below}.
The order of the arcs on the wavefront around $p_i$ 
is reverse to the order of the corresponding unit circle centers around $p_i$~\cite{melkman1988polygonal}.
By binary search, we determine the arcs $a_k$ and $a_{k+1}$
such that $p_j$ lies in between the corresponding unit circle centers
of $a_k$ and $a_{k+1}$ with respect to the angle around $p_i$.
If $a_k$ and $a_{k+1}$ are completely contained inside $C_j$,
then there is no intersection point and the wavefront remains unchanged.

Otherwise, we traverse the wavefront starting at $a_k$ to the left
until we have found an arc intersecting $C_j$, which gives us $s_1$.
We remove all arcs along the way and split the arc containing $s_1$ at $s_1$.
Symmetrically, we traverse the wavefront starting at $a_{k+1}$ to the right
to find $s_2$.
Finally, at the resulting gap, we insert the arc of $C_j$
between $s_1$ and $s_2$ into the wavefront.

\medskip\noindent\textbf{Case~\textsf{MT}:} $q_1 \prec l \prec q_2$ and
$q_3 \prec q_4 \prec r$; see \cref{fig:update-wave-front-case-mt}.
This case is symmetric to Case~\textsf{TM}.

\medskip\noindent\textbf{Case~\textsf{BB}:} $l \prec q_1 \prec q_2$ and
$r \prec q_3 \prec q_4$; see \cref{fig:update-wave-front-case-bb}.
There is no intersection point between the wavefront and $C_j$.
There cannot be a single intersection point and if there were two
intersection points, it would contradict
\cref{lem:unit-circle-wave-front-two-intersections-from-below}.
We replace the whole wavefront by the arc of $C_j$ from $q_1$ to $q_3$.

\medskip\noindent\textbf{Case~\textsf{BM}:} $l \prec q_1 \prec q_2$ and
$q_3 \prec r \prec q_4$; see \cref{fig:update-wave-front-case-bm}.
This case is symmetric to Case~\textsf{MB}.

\medskip\noindent\textbf{(Case~\textsf{BT}:)} $l \prec q_1 \prec q_2$ and
$q_3 \prec q_4 \prec r$; see \cref{fig:update-wave-front-case-bt}.
Since this configuration is symmetric to Case~\textsf{TB}, this case also cannot occur.

\medskip

Note that we only do binary search in logarithmic time
or if we traverse multiple arcs, we remove them from the wavefront.
During the whole process, we add,
for any vertex $p_j$ with $j > i$, at most one arc to the wavefront.
Therefore, we conclude \cref{lem:update-wave-front-amortized-log}.

\begin{lemma}
	\label{lem:update-wave-front-amortized-log}
	Given a two-dimensional $n$-vertex polyline $L$ and a vertex $p \in L$,
	we can find all valid shortcuts under the local \fre distance
	starting at $p$ in $\oh(n \log n)$ time.
\end{lemma}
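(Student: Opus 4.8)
The plan is to account for the total work done while processing the fixed starting vertex $p$ by traversing the polyline once and maintaining the wedge and its wavefront. I would first observe that the traversal visits $\oh(n)$ vertices, so it suffices to show that the amortized cost per vertex $p_j$ is $\oh(\log n)$. The work at each step splits into three parts: (i)~checking whether the current vertex lies in the valid region (to decide whether to emit a shortcut), (ii)~narrowing the wedge via the extra \fre-specific step that clips along the bounding rays $R_l$ and $R_r$, and (iii)~updating the wavefront by inserting the relevant portion of the bottom arc of $C_j$ and deleting the arcs that drop below it.

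For parts~(i) and~(ii), the queries (locating a point relative to the wavefront, locating $R_l$, $R_r$, and the intersection points $l$, $r$ on the wavefront) are all point-location queries answered by binary search in the balanced search tree, each costing $\oh(\log n)$ per step, hence $\oh(n \log n)$ in total over the traversal. The key structural fact that makes binary search correct is that the arcs on the wavefront are ordered around $p_i$ reverse to the order of their unit-circle centers, as noted following Case~\textsf{MM}. For part~(iii), I would invoke the case analysis of \cref{sec:wavefront-maintenance}: by \cref{lem:wave-front-has-at-most-two-intersection-points-with-another-circle} there are at most two intersection points between $C_j$ and the wavefront, and in each of the (non-degenerate, possible) cases we insert at most one arc per vertex $p_j$ onto the wavefront.

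The main obstacle, and the crux of the amortization, is the arc-removal cost: a single step may delete a linear number of arcs from the wavefront, so a per-step worst-case bound of $\oh(\log n)$ is false. The plan is to charge each deletion to the insertion that originally created that arc. By \cref{lem:wave-front-only-moves-away} the wavefront only ever moves away from $p_i$, so once an arc is removed it never reappears; combined with the fact that at most one arc is added per vertex (the footnote to \cref{lem:wave-front-has-at-most-linear-size} sharpens the bound of \cref{lem:wave-front-has-at-most-two-intersection-points-with-another-circle} to one arc per vertex via \cref{lem:unit-circle-wave-front-two-intersections-from-below}), the total number of arcs ever inserted over the whole traversal is $\oh(n)$. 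Therefore the total number of deletions is also $\oh(n)$, and since each deletion costs $\oh(\log n)$ in the balanced search tree, the aggregate deletion cost is $\oh(n \log n)$.

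Summing the three contributions, the total time to process the starting vertex $p$ and find all valid shortcuts originating at it is $\oh(n \log n)$, which is exactly the claimed bound. I would close by noting that the space remains $\oh(n)$ since the wavefront has size $\oh(n)$ by \cref{lem:wave-front-has-at-most-linear-size} and we store only the current wedge and wavefront.
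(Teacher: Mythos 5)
Your proposal is correct and follows essentially the same route as the paper: a balanced search tree keyed by angles around $p_i$, $\oh(\log n)$ point-location and binary-search queries per vertex, the case analysis of \cref{sec:wavefront-maintenance} bounding insertions to a constant number of arcs per vertex, and the amortization that charges each arc deletion to its unique insertion so that total deletion cost is $\oh(n \log n)$. This matches the paper's argument, which concludes the lemma from exactly this combination of per-step binary searches and the once-removed-never-reinserted accounting of arcs.
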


This update process in amortized logarithmic time per vertex
is tailored specifically for this polyline simplification algorithm.
We remark, however, that the more general problem of determining the two, one, or zero
intersection points between a unit circle and the wavefront can also be
accomplished in logarithmic time by a recursive case distinction.

Now we have all ingredients to prove our main theorem.

\ltwotheorem*

\begin{proof}
	According to \cref{lem:all-shortcuts-found-are-valid,lem:all-valid-shortcuts-are-found},
	the algorithm we describe in \cref{sec:algorithm-outline}
	finds all valid shortcuts.
	It remains to analyze the runtime. 
	We consider each of the $n$ vertices as potential shortcut starting point~$p_i$.	
	When we encounter a vertex~$p_j$ with $j > i$, we determine
	in logarithmic time whether it is in the valid region.
	We do this by computing the ray emanating at $p_i$ and going through~$p_j$,
	and querying the arc it intersects in the wavefront.
	Then, using the case distinction of \cref{lem:update-wave-front-amortized-log},
	we update the wavefront and the wedge.
	This needs amortized logarithmic time and over all steps $\oh(n \log n)$ time.
	
	Consequently, we construct the shortcut graph in $\oh(n^2 \log n)$ time.
	In the resulting shortcut graph, we can find an optimal polyline simplification
	by finding a shortest path in $\oh(n^2)$ time.
	
	Regarding space consumption, we observe that the wavefront maintenance only requires linear space at any time.
	As we can compute the set of outgoing shortcuts of each vertex $p_i$ individually,
	we can also easily apply the space reduction approach described for
	the Imai--Iri algorithm in \cref{sec:imai-iri}
	to get an overall space consumption in $\oh(n)$.
\end{proof}

\subsection{Extension to General \lp Norms}\label{sec:allp}
We can use   our data structure for the \ltwo norm also for the \lp norm for $p \in (1, \infty)$.
However, we should take this with a grain of salt as computing the intersection points
between lines and unit circles and between pairs of unit circles in the \lp norm
for $p \in (1, \infty) \setminus \{2\}$ may involve solving equations of degree~$p$,
which may raise numerical questions for the required precision.
To avoid this, we assume for the following corollary that we can determine
intersection points between unit circles and lines (or another unit circle)
in all \lp norms in constant time.

\begin{corollary}
	\label{clm:lp-n2logn}
	A two-dimensional $n$-vertex polyline can be simplified optimally
	under the local \fre distance in the \lp norm for $p \in (1, \infty)$
	in $\oh(n^2 \log n)$ time and $\oh(n)$ space,
	given that we can compute intersection points between a unit circle
	and a line and between two unit circles in constant time.
\end{corollary}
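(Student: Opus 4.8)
The plan is to show that the algorithm of \cref{sec:algorithm-outline} together with the analysis from the proof of \cref{clm:ltwo-n2logn} carries over to every \lp norm with $p \in (1, \infty)$, the only norm-specific ingredient being the computation of intersection points, which the hypothesis of the corollary supplies in constant time. First I would isolate which properties of the Euclidean unit circle the \ltwo analysis actually uses. Correctness was never specific to \ltwo: \cref{lem:all-shortcuts-found-are-valid,lem:all-valid-shortcuts-are-found} are stated and proved for every \lp norm with $p \in [1, \infty]$, and they invoke only \cref{lem:point-on-wave-front-reachable-by-pj,lem:wave-front-only-moves-away}, whose proofs use nothing beyond the inductive definition of wedge and wavefront. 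Hence the algorithm still outputs exactly the set of valid shortcuts, and the shortest-path phase is norm-oblivious.

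Next I would check the two quantitative pillars of the running-time bound. The linear size of the wavefront, \cref{lem:wave-front-has-at-most-linear-size}, is already phrased for $p \in (1, \infty)$; it rests on \cref{lem:wave-front-has-at-most-two-intersection-points-with-another-circle}, which in turn needs only that two translated unit circles meet in at most two points. For $p \in (1, \infty)$ the unit circle is strictly convex and smooth, and the standard fact that the length of the intersection of a convex body with a line is a concave function of the perpendicular offset of that line shows that two translates can share at most two boundary points (the superlevel set where the chord length equals $\|v\|_p$ is an interval), exactly as in the Euclidean case. Likewise \cref{lem:wave-front-has-at-most-one-intersection-point-with-a-ray} uses only convexity of the unit circle, valid for all $p \in [1, \infty]$, and the supporting \cref{clm:bottom-arcs-dont-intersect-twice,lem:unit-circle-wave-front-two-intersections-from-below,lem:unit-circles-of-the-wave-front-contain-whole-wave-front} depend only on this (strict) convexity together with the monotonicity of \cref{lem:wave-front-only-moves-away}. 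Thus every structural lemma underlying the update remains valid for $p \in (1,\infty)$.

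Finally I would argue that the amortized-logarithmic update of \cref{lem:update-wave-front-amortized-log} survives. The balanced search tree keyed by the angles of the arc endpoints around $p_i$ is purely combinatorial, as is the case distinction of \cref{sec:wavefront-maintenance}: each case is selected by the relative order of the at most six points $l, r, q_1, q_2, q_3, q_4$ along the bounding rays and then handled by binary search plus a run of arc removals, every arc being deleted at most once over the whole traversal from $p_i$. The reverse ordering of wavefront arcs versus their circle centers around $p_i$, used in Case~\textsf{MM}, again follows from strict convexity of the unit circle and is independent of $p$. Under the corollary's assumption, computing $q_1, \dots, q_4$ (ray--circle) and $s_1, s_2$ (circle--circle) costs $\oh(1)$, so the per-vertex cost stays $\oh(n \log n)$ and the total $\oh(n^2 \log n)$, with the same $\oh(n)$ space from the Imai--Iri space reduction of \cref{sec:imai-iri}. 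The main obstacle is precisely confirming that these geometric invariants---above all the at-most-two-intersections property and the reverse ordering of arcs---genuinely hold for every $p \in (1, \infty)$ and are not artefacts of the circle; once the concavity-of-chord-length and strict-convexity observations above are in place, the $p=2$ analysis applies unchanged.
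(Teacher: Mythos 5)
Your proposal is correct and takes essentially the same route as the paper: since the correctness lemmas and the structural lemmas (\cref{lem:all-shortcuts-found-are-valid,lem:all-valid-shortcuts-are-found,lem:wave-front-has-at-most-one-intersection-point-with-a-ray,lem:wave-front-has-at-most-two-intersection-points-with-another-circle,lem:wave-front-has-at-most-linear-size}) are already stated and proved for general \lp norms, the \ltwo data structure and case analysis carry over unchanged, and the only norm-specific step---computing ray--circle and circle--circle intersection points---is exactly what the corollary's hypothesis supplies in constant time. Your strict-convexity/concave-chord-length justification of the at-most-two-intersections property is a small addition beyond the paper, which simply asserts that fact for unit circles in \lp, but it does not change the structure of the argument.
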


\section{Use Cases with Small Wavefronts}
\label{sec:small-wave-fronts}
The most complicated part of the algorithm is the efficient maintenance of the wavefront.
But in case the wavefront has low complexity,
we do not need any dynamic binary search data structure to make updates,
but we can simply iterate over a linked list representing the whole wavefront
to determine the relevant intersection points and to perform the dynamic changes.
We next discuss use cases where the wavefront complexity is provably small.

\subsection{\lone and \linf Norm (Manhattan and Maximum Norm)}
\label{sec:small-wave-fronts-lone-linf}
In the \lone and \linf norm, the unit circles are actually square-shaped. Thus, the wavefront consists of a sequence of orthogonal line segments.
As we show in the next lemma, this reduces the potential size of the wavefront significantly.
\begin{lemma}
	\label{lem:wave-front-l1-linf-two-segments}
	In the \lone norm and the \linf norm, the wavefront always consists of either one or two
	(orthogonal) straight line segments.
	These straight line segments are horizontal or vertical in the \linf norm
	and rotated by 45 degrees in the \lone norm.
\end{lemma}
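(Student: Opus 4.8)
The plan is to reduce the whole statement to a convexity property. I will show that under \lone and \linf the valid region of every wedge $W_{i,j}$ is a \emph{convex} set, and then observe that the part of its boundary that forms the wavefront can only consist of segments of two fixed, mutually orthogonal directions; a convex chain built from two orthogonal directions has at most two edges, which is exactly the claim.

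First I would analyse a single local valid region. Fix $p_i$ at the bottom of the wedge and consider $D_{i,j}$ together with the unit circle $C_j$, which in \linf is an axis-parallel square and in \lone a square rotated by $45$ degrees. Since $p_i$ lies below $C_j$, only the edges of $C_j$ that face $p_i$ can carry the wave: in \linf this is the bottom edge and, depending on the horizontal position of $p_j$ relative to $p_i$, possibly one vertical side edge; in \lone it is the two lower edges of slope $+1$ and $-1$. The wave of $D_{i,j}$ is precisely the part of these \emph{near edges} between $l_j$ and $r_j$. The central step is to prove that, \emph{within the cone} $D_{i,j}$, the condition ``at or radially above the wave'' is equivalent to lying in the intersection of the closed half-planes bounded by the supporting lines of the near edges and containing $C_j$. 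This equivalence follows from the convexity of $C_j$: a ray emanating at $p_i$ meets $\partial C_j$ first on a near edge, so being radially beyond that first crossing is the same as satisfying every near-edge half-plane constraint. Hence each local valid region is the intersection of the convex cone $D_{i,j}$ with one or two half-planes, and is therefore convex.

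Next I would lift this to the global wedge. By the inductive definition of \cref{sec:algorithm-outline}, the valid region of $W_{i,j}$ is the intersection of the valid region of $W_{i,j-1}$, the local valid region of $D_{i,j}$, and the angular (hence convex) wedge $W_{i,j}$ produced by the extra narrowing step. An intersection of convex sets is convex, so the valid region of every $W_{i,j}$ is convex, and the footnote's assumption $d(p_i,p_{i+1})\ge\delta$ guarantees that $p_i$ lies \emph{outside} (below) it. The boundary of this convex region therefore splits into a far part and a connected near part facing $p_i$, and the wavefront is exactly this near part with the two bounding rays of $W_{i,j}$ removed (those rays only occur at the two extreme angles, never in the interior, so their removal leaves a single convex chain; that this chain meets each ray once is precisely \cref{lem:wave-front-has-at-most-one-intersection-point-with-a-ray}). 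Every wavefront segment lies on a near edge of some $C_j$, so in \linf it is horizontal or vertical and in \lone it has slope $-1$ or $+1$; in either norm only two orthogonal directions occur. A convex chain whose edges are parallel to one of two fixed orthogonal lines has at most two edges: two edges of the same direction would force a direction to repeat with an intervening turn of the opposite sign, contradicting the single-signed turning of a convex chain. Thus the wavefront consists of at most one segment of each orthogonal orientation, i.e.\ one or two orthogonal segments, with the orientations stated in the lemma.

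The main obstacle is the half-plane characterisation of the local valid region, i.e.\ proving rigorously that ``radially above the wave'' coincides, inside $D_{i,j}$, with the intersection of the near-edge half-planes. The degenerate configurations have to be handled so that no third edge orientation can enter the wavefront: when $p_i$ lies directly below $p_j$ (only one near edge, the bottom one), when a bounding ray is tangent to $C_j$, and~-- most delicately~-- when a later narrowing ray of \cref{sec:algorithm-outline} cuts through the single near vertical (resp.\ slope-$\pm1$) edge. Once these are checked, the convexity argument closes the proof uniformly for both norms.
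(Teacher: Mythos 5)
Your overall strategy---convexity of the valid regions under \lone and \linf, plus the observation that every wavefront edge comes from one of only two orthogonal directions---is sound, and it is close in spirit to the paper's proof, which runs a short induction instead: the wave of $D_{i,i+1}$ is the bottom arc of a square (one or two orthogonal segments), and each update intersects the valid region with a local valid region bounded by \emph{parallel} segments, i.e., it intersects two axis-parallel rectangles, which is again an axis-parallel rectangle. The paper thereby never has to count edges of a chain at all. Your proof, however, fails at exactly that counting step. The lemma you invoke---``a convex chain whose edges are parallel to one of two fixed orthogonal lines has at most two edges''---is false: three sides of an axis-parallel rectangle (vertical, then horizontal, then vertical) form a convex chain with single-signed turning whose edges use only two orthogonal directions, yet it has three edges. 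Your justification is wrong for the same reason: in that chain the two edges of repeated direction are separated by two turns of the \emph{same} sign (both $+90^\circ$ or both $-90^\circ$), so no ``turn of the opposite sign'' is forced, and convexity alone does not bound the number of edges by two.

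What actually excludes such a three-edge chain from the wavefront is not convexity but radial monotonicity with respect to $p_i$: by \cref{lem:wave-front-has-at-most-one-intersection-point-with-a-ray}, every ray emanating at $p_i$ meets the wavefront at most once, whereas along a vertical--horizontal--vertical chain the angle around $p_i$ is not monotone, so some ray would cross it twice. Equivalently, a segment lying on a vertical supporting line of the (convex) valid region is visible from $p_i$ only if $p_i$ lies on the outer side of that line, and $p_i$ cannot simultaneously lie to the left of the region's left vertical edge and to the right of its right vertical edge; hence the bottom edge can be joined by at most one side edge. You do cite the needed lemma, but only to argue connectivity of the chain after removing the bounding rays, not in the counting argument where it is indispensable. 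With that repair---convexity gives the alternation of the two directions, radial monotonicity rules out any third edge---your argument goes through; as written, its crucial final step is unsupported and the intermediate claim it rests on is false.
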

\begin{proof}
	We show this claim inductively.
	For $W_{i, i + 1} = D_{i, i + 1}$ it is just the
	bottom arc of a square (the unit circle in \lone or \linf).
	Clearly, this is either one or two orthogonal line segments~--
	horizontal or vertical line segments in the \linf norm
	and line segments rotated by 45 degrees in the \lone norm.
	
	When we compute the wavefront of $W_{i,j}$,
	we compute the intersection of the valid region of $W_{i, j-1}$
	(which is bounded by one or two orthogonal line segments by the induction hypothesis)
	and the local valid region of $D_{i,j}$ (which is bounded by
	one or two line segments parallel to the ones of $W_{i, j-1}$).
	Computing the boundary of this intersection in the \linf norm
	can be done by computing the intersection of two axis-parallel rectangles.
	The intersection of two axis-parallel rectangles is again an axis-parallel rectangle.
	In the \lone norm, the situation is the same but rotated by 45~degrees.
\end{proof}
%
We hence obtain the following theorem.

\lonelinftheorem*

\subsection{Light Polylines}
In \cref{lem:unit-circles-of-the-wave-front-contain-whole-wave-front},
we have observed that for any vertex $p$ whose unit circle $C$ contributes
to the current wavefront, $C$ actually contains the complete wavefront.
Thus, if two vertices contribute an arc to the current wavefront,
they are within a distance of $2\delta$, i.e., inside a unit circle.
To end up with a complex wavefront, there hence need to be many vertices in close proximity
(and they also need to occur in a specific pattern for all
of their unit circles contributing to the wavefront simultaneously).
Accordingly, if we consider polylines with bounded vertex density,
the wavefront complexity is bounded as well.
To formalize this, we introduce the natural class of
\emph{$\nu$-light} polylines.
\begin{definition}
	A polyline $L$ in $d$ dimensions is \emph{$\nu$-light}
	if for any $k \in \{2, \dots, n\}$,
	no set (in particular not the closest set) of $k$ vertices of~$L$
	lies in a ball of radius less than $(k / \nu)^{1/d}$.
\end{definition}

Before we exploit the properties of $\nu$-lightness in the context of polyline
simplification, we want to gain some more intuition behind this definition.
If a polyline in two dimensions is $\nu$-light,
this guarantees that the vertices are somewhat well distributed:
the closest pair of vertices has a distance of at least $2 \cdot \sqrt{2/\nu}$,
the closest triplet of vertices has a surrounding circle of diameter
at least $2 \cdot \sqrt{3/\nu}$ and so on.
An alternative (and maybe more intuitive) definition is that
a polyline $L$ is $\nu$-light if for any point $p \in \mathbb{R}^d$
and any radius $r > 0$, the number of vertices of $L$
inside the ball $B_r(p)$ of radius $r$ centered at $p$ is at most $\max \{\nu r^d, 1\}$.
This shows the connection to the related concepts of $c$-packed curves,
$\phi$-low density curves, and $\kappa$-bounded curves
studied in previous work to show improved bounds,
e.g., for computing the (approximate) \fre distance between two curves~\cite{driemel2012approximating}.
The main difference is that these classifications do not distinguish between polyline vertices
and points on the straight line segments in between,
which, however, is important in our scenario.

Now let us revisit our algorithm for polyline simplification with
distance threshold $\delta$ for the local \fre distance.
In a ball of radius  $r=\delta$ (i.e., a unit circle),
a two-dimensional $\nu$-light polyline has $\oh(\nu \delta^2)$ vertices.
Hence, for any constant choice of~$\delta$,
the wavefront complexity is $\oh(\nu)$.
The running time of the algorithm is then in $\oh(n^2 \log \nu)$,
or in $\oh(n^2 \nu)$ when omitting the tree data structure.
So if $\nu \in \oh(1)$, the resulting running time is quadratic even without using a dedicated dynamic data structure.

\begin{theorem}
	A two-dimensional $\nu$-light $n$-vertex polyline can be simplified optimally under the local \fre distance
	in the \lp norm for $p \in [1, \infty]$
	in $\oh(n^2)$ time and $\oh(n)$ space,
	given that we can compute intersection points between a unit circle
	and a line and between two unit circles in constant time.
\end{theorem}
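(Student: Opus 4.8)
The plan is to show that $\nu$-lightness forces every wavefront to have constant complexity (for fixed $\nu$ and $\delta$), so that the balanced search tree used in \cref{sec:wavefront-maintenance} can be replaced by a plain doubly linked list on which every update runs in constant time; correctness then carries over verbatim from \cref{lem:all-shortcuts-found-are-valid,lem:all-valid-shortcuts-are-found}, which do not depend on how the wavefront is stored.

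First I would bound the number of arcs on any wavefront. Fix a starting vertex $p_i$ and consider the wavefront of some $W_{i,k}$, and pick any point $q$ lying on it. By \cref{lem:unit-circles-of-the-wave-front-contain-whole-wave-front}, every vertex $p_j$ whose unit circle $C_j$ contributes an arc to this wavefront satisfies $q \in C_j$, i.e., $d(p_j, q) \le \delta$. Hence all contributing vertices lie inside the ball $B_\delta(q)$ of radius $\delta$ centered at $q$. Applying the (equivalent) characterization of a $\nu$-light polyline with $r = \delta$ and $d = 2$, the number of vertices of $L$ inside $B_\delta(q)$ is at most $\max\{\nu \delta^2, 1\}$ (differences between the chosen \lp norm and the norm in the definition only affect the constant, by norm equivalence in fixed dimension). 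Since each contributing vertex supplies $\oh(1)$ arcs of the wavefront, the wavefront consists of $\oh(\nu \delta^2)$ arcs, which is $\oh(1)$ for constant $\nu$ and $\delta$.

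Next I would convert this size bound into the claimed running time. Because every wavefront encountered while processing $p_i$ has $\oh(\nu \delta^2)$ arcs, I store it in a linked list. Locating a queried vertex $p_j$ relative to the wavefront, determining the up-to-two intersection points of $C_j$ with the wavefront, applying the extra narrowing step of \cref{sec:algorithm-outline}, and executing the insert/remove operations of the case distinction in \cref{sec:wavefront-maintenance} can then all be performed by a single linear scan over the whole list, i.e., in $\oh(\nu \delta^2)$ time per vertex~-- using the hypothesis that each individual intersection computation (unit circle with a line, or with another unit circle) takes constant time in the given \lp norm, exactly as assumed in \cref{clm:lp-n2logn}. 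No dynamic binary search structure is needed.

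Finally I would assemble the bounds. For each of the $n$ starting vertices $p_i$ we process $\oh(n)$ subsequent vertices at $\oh(\nu \delta^2)$ time each, giving $\oh(n^2 \nu \delta^2) = \oh(n^2)$ time to build the shortcut graph for constant $\nu, \delta$; a shortest path is found in $\oh(n^2)$ time as in \cref{sec:imai-iri}, and the same interleaving argument keeps the total space in $\oh(n)$. For the \lone and \linf norms the wavefront already has at most two segments by \cref{lem:wave-front-l1-linf-two-segments}, so the claim holds there without invoking $\nu$-lightness, covering the full range $p \in [1, \infty]$. The main obstacle is making the first step airtight: \cref{lem:unit-circles-of-the-wave-front-contain-whole-wave-front} bounds only the vertices that \emph{currently} contribute arcs, so I must verify that the transient removal of arcs during the narrowing step and the update cases never causes the live wavefront at an intermediate moment to exceed $\oh(\nu \delta^2)$ arcs~-- this is precisely what guarantees that each linear scan, not merely the final structure, stays within the stated time bound.
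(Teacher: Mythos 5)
Your proposal is correct and follows essentially the same route as the paper: it uses \cref{lem:unit-circles-of-the-wave-front-contain-whole-wave-front} to confine all contributing vertices to a ball of radius $\oh(\delta)$, invokes the density characterization of $\nu$-lightness to bound the wavefront by $\oh(\nu\delta^2)$ arcs, and then replaces the balanced search tree by a linked list to get $\oh(n^2\nu)=\oh(n^2)$ time for constant $\nu$ and $\delta$. The worry you raise at the end is not a genuine gap: each update scan starts from the wavefront of $W_{i,j-1}$, which is itself a wavefront and hence already obeys the $\oh(\nu\delta^2)$ bound, and the update only deletes arcs from it and inserts at most one new arc, so no intermediate state ever exceeds that bound plus one.
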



\section{Conclusions and Future Work}
We have identified and closed a seeming gap in literature
concerning a natural problem in computational geometry.
Namely, the question of whether there is a subcubic-time
algorithm that computes for a given polyline
an optimal simplification under the local \fre distance.
Simultaneous to us, Buchin et al.~\cite{buchin2022efficient}
have answered this question positively by
providing an $\oh(n^{5/2+\varepsilon})$ time algorithm as an application of a new data structure.
We have described an algorithm with a running time in $\oh(n^2 \log n)$,
which is worst-case optimal up to a logarithmic factor
for any algorithm that explicitly or implicitly constructs the whole shortcut graph.
Our algorithm is simpler and faster than the one of Buchin et al.

Our algorithm does not provide new techniques,
but we use, modify, and combine existing approaches
for similar polyline simplification algorithms,
which have not been described for this precise setting yet.
Although the Chan--Chin algorithm for the local \hau distance is asymptotically faster,
it requires two sweeps over the polyline to identify
the set of valid shortcuts,
while we maintain the elegance of the Melkman--O'Rourke algorithm
by requiring only a single sweep.
Moreover, our algorithm is simpler than the line stabbing algorithm by Guibas et~al.

As polyline simplification is a building block, for example,
for trajectory clustering algorithms \cite{brankovic2020k}
or simplification algorithms for more generalized structures \cite{bosch2021consistent},
our result may also help to improve running times of such approaches.
Due to its basic nature, there are many more use cases conceivable
where this algorithm can serve as a black-box subroutine.
Also, note that our running time is only a logarithmic factor slower than the widely used Douglas--Peucker heuristic for the local \fre distance \cite{Kreveld2020},
but we compute the optimal simplification.

Moreover, we conjecture that in practice, the running time of the algorithm we describe
should be 
quadratic even if one omits the tree data structure and
simply uses a linked list and linear searches to maintain and update the wavefront.
For a large wavefront to arise, the polyline vertices need to form a specific pattern, which is unlikely to occur naturally.
It would be interesting to validate this claim empirically~--
maybe even including the concept of $\nu$-light polylines.

Furthermore, the investigation of lower bounds could shed light
on the question of whether our upper bounds are tight.
Existing lower bounds only apply to simplification of polylines in high dimensions.
For the practically most relevant use case of two dimensions
no (conditional) lower bounds are known, though.
Another direction for future work would be to generalize
the algorithm to work in higher dimensions, which
implies
a more complex wavefront. 
Finally, one could also consider further distance measures,
as e.g.\ the \fre distance under the \lp metric for $p \in (0,1)$,
where the respective unit circles are not convex anymore,
which could make updating the wavefront data structure more expensive.

In some applications, it can be a drawback that in the classical definition of
a polyline simplification algorithm, the two endpoints of a polyline always need to be kept.
One could relax this constraint and investigate the setting that also the endpoints
may be removed as long as this removal does not violate the distance constraint.
This may be done in a (user) study using real-world examples.

\bigskip
\noindent\textbf{Acknowledgments.}
We thank Peter Sch\"afer for the helpful remark that
Case~\textsf{TB} and Case~\textsf{BT} in \cref{sec:wavefront-maintenance} cannot occur.

\bibliography{polylines}


\clearpage
\appendix

\section{Omitted Content from \cref{sec:algorithm}}
\label{app:algorithm}

\begin{figure}[t]
	\centering
	\includegraphics[page=1]{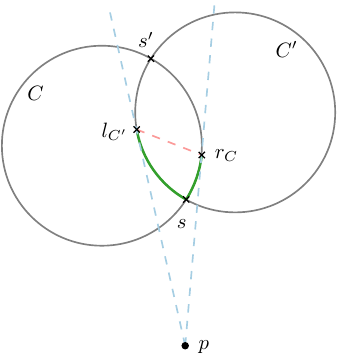}
	\caption{Illustration of the situation described in the proof of
		\cref{clm:bottom-arcs-dont-intersect-twice}.}
	\label{fig:bottom-arc-intersection}
\end{figure}

We start the appendix with a structural lemma, which we employ for the proofs of
\cref{lem:unit-circles-of-the-wave-front-contain-whole-wave-front,%
	lem:unit-circle-wave-front-two-intersections-from-below,%
	lem:update-wave-front-amortized-log,%
	lem:wave-front-has-at-most-two-intersection-points-with-another-circle}.
It does not yet use the wavefront.

\newcounter{saveTheorem}
\setcounter{saveTheorem}{\value{theorem}}
\setcounter{theorem}{\value{bottomArcsDontIntersectTwice}}
\bottomArcsDontIntersectTwice*
\label{clm:bottom-arcs-dont-intersect-twice*}
\setcounter{theorem}{\value{saveTheorem}}

\begin{proof}
	For an illustration of this proof see \cref{fig:bottom-arc-intersection}.
	Let $C$ and $C'$ be the two unit circles with the center of $C$
	being left of the center of  $C'$ w.r.t.~$p$.
	Now the cone between the right tangential from $p$ on $C$
	and the left tangential from $p$ on $C'$ contains all of the
	intersection area of $C$ and $C'$, and hence also both intersection points.
	We call the tangetial points $r_C$ and $l_{C'}$, respectively.
	Note that $r_C=l_{C'}$ is excluded as then $C$ and $C'$ would only
	have a single intersection point.
	For the intersection point $s$ between the bottom arcs of $C$ and $C'$,
	we know that the line segment $\overline{ps}$ does not intersect the
	inner part of any of the two circles by definition of the bottom arc.
	Hence the ray elongating this line segment has to go through
	the intersection area of $C$ and $C'$ above $s$.
	Therefore, the partial bottom arc of $C$ from $s$ to $r_C$
	and the partial bottom arc of $C'$ from $s$ to $l_{C'}$ are both
	on the boundary of the intersection area.
	As the intersection area is convex, it means that the
	line segment $\overline{l_{C'}r_C}$ is fully contained in the intersection area,
	and the intersection points have to be on opposite sites of
	the line through $l_{C'}$ and $r_C$.
	Accordingly, the second intersection point~$s'$ of $C$ and $C'$
	then has to lie above $\overline{l_{C'}r_C}$ and is therefore on the
	respective top arcs of $C$ and $C'$.
\end{proof}

We continue with another structural lemma,
which seems rather special at first glance, but we
employ it several times here in the appendix and in the main part,
e.g., in \cref{sec:wavefront-maintenance},
where we analyze the cases for the wavefront maintenance.

\begin{figure}[]
	\centering
	\begin{subfigure}[t]{0.3 \linewidth}
		\centering
		\includegraphics[page=1, trim={8 110 0 0},clip]{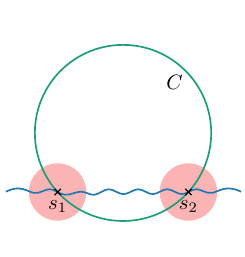}
		\caption{Situation that cannot occur.}
		\label{fig:unit-circle-intersects-wave-front-not-from-above}
	\end{subfigure}
	\hfill
	\begin{subfigure}[t]{0.31 \linewidth}
		\centering
		\includegraphics[page=2, trim={4 110 0 0},clip]{unit-circle-intersects-wave-front-not-from-above}
		\caption{$C$ intersects the wavefront twice with its bottom arc.}
		\label{fig:unit-circle-intersects-wave-front-not-from-above-Case-1}
	\end{subfigure}
	\hfill
	\begin{subfigure}[t]{0.32 \linewidth}
		\centering
		\includegraphics[page=3, trim={0 100 0 0},clip]{unit-circle-intersects-wave-front-not-from-above}
		\caption{$C$ intersects the wavefront with its top and bottom arc.}
		\label{fig:unit-circle-intersects-wave-front-not-from-above-Case-2}
	\end{subfigure}
	
	\caption{Sketch for \cref{lem:unit-circle-wave-front-two-intersections-from-below}:
		a unit circle~$C$ intersects the wavefront (blue wavy line) twice.}
\end{figure}

\setcounter{saveTheorem}{\value{theorem}}
\setcounter{theorem}{\value{twoIntersectionFromBelow}}
\twoIntersectionFromBelow*
\label{lem:unit-circle-wave-front-two-intersections-from-below*}
\setcounter{theorem}{\value{saveTheorem}}

\begin{proof}
	Clearly, if at $s_1$ the top arc of~$C$ intersects the wavefront,
	then on the left side of~$s_1$, $C$~is below the wavefront.
	Symmetrically, the same holds for~$s_2$.
	
	Now assume that at $s_1$ and at $s_2$, the bottom arc of~$C$
	intersects the arcs $a_j$ and $a_k$ of the wavefront, respectively.
	We denote their unit circles by $C_j$ and $C_k$.
	W.l.o.g.\ let $C$ on the left side of $s_1$ be above the wavefront.
	By \cref{lem:unit-circles-of-the-wave-front-contain-whole-wave-front},
	$C_j$ contains the rest of the wavefront including all of $a_k$.
	This means, that $C$ intersects $C_j$ at $s_3$ in between $s_1$ and $s_2$
	(potentially $s_2 = s_3$ if $C_j = C_k$); see
	\cref{fig:unit-circle-intersects-wave-front-not-from-above-Case-1}.
	Because the intersection of $C$ at $s_2$ is with the bottom arc of $C$,
	the intersection of $C$ and $C_j$ at $s_3$ is also with the bottom arc of $C$.
	This contradicts \cref{clm:bottom-arcs-dont-intersect-twice}.
	
	Finally, assume w.l.o.g.\ that at $s_1$ the top arc of~$C$
	intersects the arc $a_j$ of the wavefront
	and at $s_2$ the bottom arc of~$C$ intersects the arc~$a_k$ of the wavefront;
	see \cref{fig:unit-circle-intersects-wave-front-not-from-above-Case-2}.
	Again by \cref{lem:unit-circles-of-the-wave-front-contain-whole-wave-front},
	the unit circle~$C_j$ of $a_j$ contains the wavefront including the whole
	arc~$a_k$.
	Hence, there is an intersection point $s_3$ of $C$ and $C_j$ in between $s_1$ and $s_2$
	(where $s_2 \ne s_3$ and $C_j \ne C_k$ as otherwise $C$ and $C_j$ would
	have an intersection between their bottom arcs and between a bottom and a top arc).
	At $s_3$ there is the bottom arc of~$C$
	(since later at $s_2$, there is also the bottom arc of~$C$ involved).
	If $C_j$ also would have its bottom arc at $s_3$,
	it would contradict \cref{clm:bottom-arcs-dont-intersect-twice}.
	Therefore, at $s_3$, there is the top arc of $C_j$.
	This however means that $s_3$ is outside~$D_{i,j}$~-- a contradiction.
\end{proof}

\begin{figure}[]
	\centering
	\includegraphics[page=1, trim={0 82 0 0},clip]{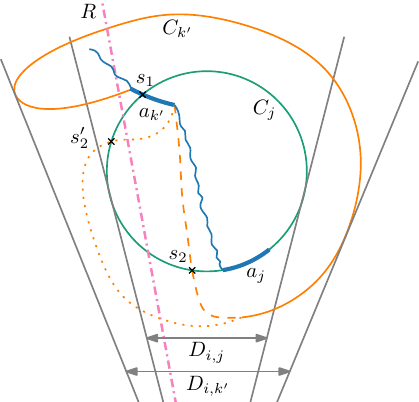}
	\caption{Configuration used to prove
		\cref{lem:unit-circles-of-the-wave-front-contain-whole-wave-front}.
		The blue part is the wavefront including the arcs~$a_{k'}$ and $a_j$.
		The circles~$C_j$ and $C_{k'}$ are unit circles and the gray rays
		indicate the local wedges.}
	\label{fig:wave-front-inside-unit-circle}
\end{figure}

\setcounter{saveTheorem}{\value{theorem}}
\setcounter{theorem}{\value{waveFrontInsideUnitCircle}}
\waveFrontInsideUnitCircle*
\label{lem:unit-circles-of-the-wave-front-contain-whole-wave-front*}
\setcounter{theorem}{\value{saveTheorem}}

\begin{proof}
	We argue that, for all $j \in \{i+1, \dots, k\}$,
	the claim is true by considering first all arcs that had been added before
	and then all arcs that had been added after
	the arc of $C_j$ had been added to the wavefront.
	
	All arcs~$a_{j'}$ on the wavefront belonging to a vertex $p_{j'}$ with $j' < j$
	are inside~$C_j$ because when the wavefront of~$W_{i,j}$ has been constructed,
	the wavefront of~$W_{i, j}$ consisted of arcs of the
	wave of $D_{i, j}$, i.e., arcs of~$C_j$,
	and it consisted of arcs of the wavefront of $W_{i, j-1}$ lying
	inside $I$, i.e., the intersection between $C_j$
	and the valid region of $W_{i, j-1}$%
	\footnote{We remark that even without the extra narrowing step using~$I$,
		if $C_j$ contributed an arc of the wavefront of~$W_{i, j}$,
		$C_j$ contained the whole wavefront of~$W_{i, j}$.}.
	
	All arcs~$a_{k'}$ on the wavefront belonging to a vertex $p_{k'}$
	with $j < k' \le k$ are completely inside~$C_j$ because if they were not,
	there would be an $a_{k'}$ (which is part of the bottom arc of the unit
	circle~$C_{k'}$) that intersects $C_j$ at~$s_1$;
	see \cref{fig:wave-front-inside-unit-circle}.
	The intersection at~$s_1$ is with the top arc of $C_j$ as otherwise
	$a_{k'}$ would be (partially) outside the local valid region of~$D_{i,j}$.
	Still for $a_j$ to be in the local valid region of~$D_{i,k'}$,
	$C_j$ and $C_{k'}$ intersect a second time.
	We consider two possible cases for a second intersection and
	denote them by $s_2$ and $s_2'$.
	First, assume that the intersection $s_2$ is between the bottom
	arc of $C_{k'}$ and the bottom arc of $C_j$.
	This however contradicts \cref{clm:bottom-arcs-dont-intersect-twice}
	because in $s_1$, there was already the bottom arc of $C_{k'}$ involved.
	Hence, the second intersection point is $s_2'$
	which is an intersection between the bottom
	arc of $C_{k'}$ and the top arc of $C_j$.
	Then, however, there is a ray~$R$ originating in $p_i$
	that lies in between $s_1$ and $s_2'$
	and intersects the bottom arc of $C_{k'}$ at least twice~--
	a contradiction.
\end{proof}

\section{Omitted Content from \cref{sec:wavefront}}
\label{app:wavefront-size}

\setcounter{saveTheorem}{\value{theorem}}
\setcounter{theorem}{\value{atMostTwice}}
\atMostTwice*
\label{lem:wave-front-has-at-most-two-intersection-points-with-another-circle*}
\setcounter{theorem}{\value{saveTheorem}}
\begin{proof}
	We prove this statement inductively.
	Say $p_i$ is our start vertex and we consider the wavefront of $W_{i, i + 1}$,
	which is the same as the wave of $D_{i, i + 1}$, which is part of the boundary of a unit circle.
	Since each pair of unit circles in the \lp norm for $p \in [1, \infty]$ intersects
	at most twice, we know that any unit circle intersects the
	wavefront of $W_{i, i + 1}$ at most twice.
	
	It remains to show the induction step for all $j > i + 1$.
	Assume for a contradiction that a unit circle~$C$ intersects
	the wavefront of~$W_{i, j}$ more than twice.
	Observe that the wavefront of~$W_{i, j}$ is a subset of the wavefront of $W_{i, j - 1}$
	and the wave of~$D_{i, j}$.
	Say $C$ intersects the wavefront of~$W_{i, j - 1}$ at $q_1$ and $q_2$
	and $C$ intersects the wave of $D_{i, j}$ at $q_3$ and $q_4$
	(maybe one of these points does not exist.)
	Next, we argue topologically that at most two points of $\{q_1, q_2, q_3, q_4\}$
	lie on the wavefront of~$W_{i, j}$, which is a contradiction.

	\begin{figure}[]
		\centering
		\begin{subfigure}[t]{0.45 \linewidth}
			\centering
			\includegraphics[page=1]{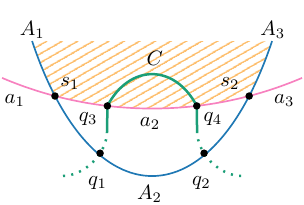}
			\caption{\centering Case A.}
			\label{fig:wave-front-and-unit-circle-intersect-at-most-twice-Case-A1}
		\end{subfigure}
		\hfill
		\begin{subfigure}[t]{0.45 \linewidth}
			\centering
			\includegraphics[page=2]{wave-front-and-unit-circle-intersect-at-most-twice}
			\caption{\centering Case B.}
			\label{fig:wave-front-and-unit-circle-intersect-at-most-twice-Case-A2}
		\end{subfigure}
		
		\caption{Cases in the proof of
			\cref{lem:wave-front-has-at-most-two-intersection-points-with-another-circle}.}
		\label{fig:wave-front-and-unit-circle-intersect-at-most-twice}
	\end{figure}
	
	By the induction hypothesis, the wavefront of~$W_{i, j - 1}$
	and the wave of~$D_{i, j}$ intersect at most twice.
	Let these intersection points from left to right be $s_1$ and $s_2$; see
	\cref{fig:wave-front-and-unit-circle-intersect-at-most-twice}.
	Let the subdivisions of the wavefront of~$W_{i, j - 1}$ and the wave of~$D_{i, j}$
	induced by~$s_1$ and~$s_2$ be $A_{1}, A_{2}, A_{3}$ and $a_{1}, a_{2}, a_{3}$, respectively.
	Some of them may be empty.
	Clearly, the wavefront of $W_{i, j}$ is either $A_{1}$--$a_{2}$--$A_{3}$ or $a_{1}$--$A_{2}$--$a_{3}$.
	By \cref{lem:unit-circle-wave-front-two-intersections-from-below},
	we know that it cannot be $a_{1}$--$A_{2}$--$a_{3}$,
	therefore, it is $A_{1}$--$a_{2}$--$A_{3}$.
	
	Next, we analyze the intersection points~$q_3$ and $q_4$
	(maybe $q_4$ does not exist).
	Either one or two of them lies on $a_{2}$ as otherwise there are no
	more than two intersection points of $C$ with the new wavefront.
	
	\medskip\noindent\textbf{Case~A:}
	The intersection points $q_3$ and $q_4$ lie on $a_{2}$;
	see \cref{fig:wave-front-and-unit-circle-intersect-at-most-twice-Case-A1}.
	As both intersection points are between the unit circle~$C$ and the wave
	of $D_{i,j}$, i.e., a bottom arc of another unit circle,
	we know by \cref{clm:bottom-arcs-dont-intersect-twice}
	that $q_3$ and $q_4$ are contained in the top arc of~$C$.
	Thus, there is no ray~$R$ to the left of $q_3$ or to the right of~$q_4$
	originating at~$p_i$ and intersecting the arc of~$C$ between~$q_3$ and~$q_4$
	as otherwise $R$ would intersect the top arc of~$C$ twice.
	Therefore, the arc of $C$ between $q_3$ and $q_4$ lies in the valid region
	(hatched orange in \cref{fig:wave-front-and-unit-circle-intersect-at-most-twice-Case-A1})
	without reaching $A_{1}$ or $A_{3}$.
	When $C$ passes through $q_3$ and $q_4$, it reaches the region between $a_{2}$ and~$A_{2}$.
	If there are intersections between $W_{i, j - 1}$ and $C$, they both lie on~$A_{2}$.
	
	\medskip\noindent\textbf{Case~B:}
	Only one intersection point, let it be $q_3$, lies on $a_{2}$;
	see \cref{fig:wave-front-and-unit-circle-intersect-at-most-twice-Case-A2}.
	If it is a touching point, then $C$ lies in the region
	between $a_{2}$ and~$A_{2}$ before and after reaching $q_3$
	(because we can assume that both unit circles are non-identical).
	If it is an intersection point, then $C$ passes through $q_3$ into
	the region between $a_{2}$ and~$A_{2}$ (hatched orange in
	\cref{fig:wave-front-and-unit-circle-intersect-at-most-twice-Case-A2}).
	To leave this region, $q_1$ (or $q_2$) lies on $A_{2}$.
	Hence, there are at most two points of $\{q_1, q_2, q_3, q_4\}$ on
	the new wavefront.
\end{proof}

\end{document}